\definecolor{red}{RGB}{255,0,0}
\definecolor{blue}{RGB}{0,0,255}
\definecolor{green}{RGB}{0,255,0}
\newenvironment{proof}{\noindent{\bf Proof:}}{$\square$ \vspace{3mm}}
\newcommand {\abs}[1]  {\left\vert#1\right\vert}
\newcommand {\set}[1]  {\left\{#1\right\}}
\newcommand {\defined} {\stackrel{def} {=}}
\newcommand {\runningtitle}[1] {\vspace{0.5ex}\noindent{\textbf{\boldmath #1:}}}
\newcommand{\ignore}[1] {}
\newcommand {\commentfig}[1] {#1}
\renewcommand{\commentfig}[1] {}
\newtheorem{theorem} {Theorem}
\newtheorem{lemma} {Lemma}
\newtheorem{claim}  {Claim}  
\newtheorem{corollary}  {Corollary}
\newtheorem{observation}  {Observation} 
\newtheorem{proposition} {Proposition}
\newcommand{\TT}  {{\cal T}}
\newcommand{\rep} {\left< T,\TT \right>}
\newcommand{\repprime} {\left< T',\TT' \right>}
\newcommand{\repbar} {\left< \bar{T},\bar{\TT} \right>}
\newcommand{\contract}[2] {{#1}_{/{#2}}}
\newcommand{\contracttte} {\contract{\TT}{e}}
\newcommand{\cg}{\textsc{ChordalGen}}
\newcommand{\gs}{\textsc{GrowingSubtree}}
\newcommand{\cn}{\textsc{ConnectingNodes}}
\newcommand{\pt}{\textsc{PrunedTree}}
\begin{document}

\title{The Complexity of Subtree Intersection Representation of Chordal Graphs and Linear Time Chordal Graph Generation\thanks{The first author acknowledges the support of the Turkish Academy of Science TUBA GEBIP award.}\thanks{A preliminary version of this work will be presented in the $SEA^2$ 2019 conference}}

\author[1]{T{\i}naz Ekim}
\author[2]{Mordechai Shalom}
\author[1]{Oylum \c{S}eker}

\affil[1]{
Department of Industrial Engineering, Bogazici University, Istanbul, Turkey
\footnote{tinaz.ekim,oylum.seker@boun.edu.tr}
}
\affil[2]{
TelHai Academic College, Upper Galilee, 12210, Israel
\footnote{cmshalom@telhai.ac.il}
}

\maketitle 

\begin{abstract}
It is known that any chordal graph on $n$ vertices can be represented as the intersection of $n$ subtrees in a tree on $n$ nodes \cite{Gavril74}.
This fact is recently used in \cite{seker17} to generate random chordal graphs on $n$ vertices by generating $n$ subtrees of a tree on $n$ nodes.
It follows that the space (and thus time) complexity of such an algorithm is at least the sum of the sizes of the generated subtrees assuming that a tree is given by a set of nodes. 
In \cite{seker17}, this complexity was mistakenly claimed to be linear in the number $m$ of edges of the generated chordal graph.
This error is corrected in \cite{seker2018generation} where the space complexity is shown to be $\Omega(m n^{1/4})$.
The exact complexity of the algorithm is left as an open question. 

In this paper, we show that the sum of the sizes of $n$ subtrees in a tree on $n$ nodes is $\Theta(m\sqrt{n})$. 
We also show that we can confine ourselves to contraction-minimal subtree intersection representations since they are sufficient to generate every chordal graph.
Furthermore, the sum of the sizes of the subtrees in such a representation is at most $2m+n$.
We use this result to derive the first linear time random chordal graph generator. 
Based on contraction-minimal representations, we also derive structural properties of chordal graphs related to their connectivity. 
In addition to these theoretical results, we conduct experiments to study the quality of the chordal graphs generated by our algorithm and compare them to those in the literature. 
Our experimental study indicates that the generated graphs do not have a restricted structure and the sizes of maximal cliques are distributed fairly over the range.
Furthermore, our algorithm is simple to implement and produces graphs with 10000 vertices and $4 . 10^7$ edges in less than one second on a laptop computer. 

\textbf{Keywords:}
Chordal graph, Representation Complexity, Graph Generation.
\end{abstract}

\section{Introduction}\label{sec:intro}
Chordal graphs are extensively studied in the literature from various aspects which are motivated by both theoretical and practical reasons. 
Chordal graphs have many application areas such as sparse matrix computations, database management, perfect phylogeny, VLSI, computer vision, knowledge based systems, and Bayesian networks (see e.g. \cite{Golumbic:2004:AGT:984029,IntersectionGT,pearl2014probabilistic,rose1972grap}).
Consequently, numerous exact / heuristic / parameterized algorithms have been developed for various optimization and enumeration problems on chordal graphs. 
The need for testing and comparing these algorithms motivated researchers to generate random chordal graphs  \cite{markenzon2008two,pemmaraju2005approximating,andreou2005generating}. 
A more systematic study of random chordal graph generators has been initiated more recently in \cite{seker17,seker2018generation}. 
The generic method developed in these papers is based on the characterization of chordal graphs as the intersection graph of subtrees of a tree \cite{Gavril74}, 
to which we will refer as a {\em subtree intersection representation}. 
In this method a chordal graph on $n$ vertices and $m$ edges is generated in three steps:\\
\begin{enumerate}
\item Generate a tree $T$ on $n$ nodes uniformly at random.\label{step:tree}
\item Generate $n$ non-empty subtrees $\set{T_1, \ldots, T_n}$ of $T$.\label{step:subtrees}
\item Return the intersection graph $G$ of $\set{V(T_1), \ldots, V(T_n)}$.\label{step:chordal}
\end{enumerate}

Three methods for generating subtrees in Step \ref{step:subtrees} have been suggested. 
In all these methods, every node of every subtree is generated. 
Steps \ref{step:tree} and \ref{step:chordal} being linear in the size of $G$, 
the time and space complexities of the algorithm are dominated by the sum of the sizes of the subtrees generated at Step \ref{step:subtrees} which is $\sum_{i=1}^n \abs{V(T_i)}$. 
In \cite{seker17}, this complexity was mistakenly claimed to be linear in the size of $G$, that is $O(n+m)$. 
In \cite{seker2018generation}, this mistake is corrected by showing that $\sum_{i=1}^n \abs{V(T_i)}$ is $\Omega(m n^{1/4})$,
leaving the upper bound as an open question. 
This question is crucial for the complexity of any chordal graph generator that produces every subtree intersection representation on a tree of $n$ nodes.
In this paper, we investigate the complexity of subtree intersection representations of chordal graphs.
We show that $\sum_{i=1}^n \abs{V(T_i)}$ is $\Theta(m\sqrt n)$. 
In other words, we both improve the lower bound of $\Omega(mn^{1/4})$ given in \cite{seker2018generation} and provide a matching upper bound. 
On the other hand, we show that the size of a "contraction-minimal" representation is linear, more precisely, at most $2m+n$. 
This result plays the key role in developing a linear time chordal graph generator, the first algorithm in the literature having this time complexity, to the best of our knowledge. 
Our algorithm is also simple to implement.
Our experiments indicate that it produces graphs for which the maximal clique sizes are distributed fairly over the range. 
Furthermore, the running time of the algorithm clearly outperforms the existing ones: 
graphs with 10000 vertices and $4.10^7$ edges are generated in less than one second on a personal computer.

Contraction-minimal representations of chordal graphs are further exploited to exhibit structural properties of chordal graphs. 
In particular, we show that every minimal representation of a chordal graph $G$ is on a tree with $t$ nodes where $t$ is the number of maximal cliques of $G$. Subsequently, we derive from this result that the connectivity of a chordal graph is at most $n-t$. Using this result, we show that our chordal graph generator can guarantee $k$-connectivity also in linear time. 

We proceed with definitions and preliminaries in Section \ref{sec:prelim}. 
Then, for technical reasons, we first consider contraction-minimal representations in Section \ref{sec:minimal}. 
We develop our linear time random chordal graph generator in Section \ref{subsec:linear}. 
We study the variety of chordal graphs generated by this algorithm in Section \ref{subsec:experiments}; 
to this end, following the practice in similar studies in the literature, we analyze maximal cliques of the generated graphs. 
In Section \ref{subsec:connectivity}, we analyze the links between contraction-minimal representations and the connectivity properties of chordal graphs they represent.
We proceed with the complexity of arbitrary subtree intersection representations in Section \ref{sec:general}. 
We conclude in Section \ref{sec:conclusion} by suggesting further research. 

\section{Preliminaries}\label{sec:prelim}
\runningtitle{Graphs}
We use standard terminology and notation for graphs, see for instance \cite{D12}.
We denote by $[n]$ the set of positive integers not larger than $n$. 
Given a simple undirected graph $G$, we denote by $V(G)$ the set of vertices of $G$ and by $E(G)$ the set of the edges of $G$.
We denote an edge between two vertices $u$ and $v$ as $uv$.
We say that 
a) the edge $uv \in E(G)$ is {\em incident} to $u$ and $v$, 
b) $u$ and $v$ are the endpoints of $uv$, and 
c) $u$ and $v$ are adjacent to each other.
We denote by $\contract{G}{e}$ the graph obtained from $G$ by contracting the edge $e$.
A \emph{chord} of a cycle $C$ of a graph $G$ is an edge of $G$ that connects two vertices that are non-adjacent in $C$.
A graph is \emph{chordal} if it contains no induced cycles of length 4 or more.
In other words, a graph is chordal if every cycle of length at least 4 contains a chord. 
A vertex $v$ of a graph $G$ is termed \emph{simplicial} if the subgraph of $G$ induced by $v$ and its neighbors is a complete graph. 
A graph $G$ on $n$ vertices is said to have a \emph{perfect elimination order} if there is an ordering $v_1,\ldots ,v_n$ of its vertices, 
such that $v_i$ is simplicial in the subgraph induced by the vertices $\set{v_i, \ldots ,v_{n}}$ for every $i \in [n]$. 
It is known that a graph is chordal if and only if it has a perfect elimination order \cite{Golumbic:2004:AGT:984029}. 

\runningtitle{Trees, subtrees and their intersection graphs}
Let $\TT = \set{ T_1, \ldots, T_n}$ be a set of subtrees of a tree $T$.
Let $G=(V,E)$ be a graph over the vertex set $\set{v_1, \ldots, v_n}$ where $v_i$ represents $T_i$ and such that $v_i$ and $v_j$ are adjacent if and only if $T_i$ and $T_j$ have a common node.
Then, $G$ is termed as the \emph{vertex-intersection graph} of $\rep$ and conversely $\rep$ is termed a subtree intersection representation, 
or simply a \emph{representation} of $G$. 
We will denote the intersection graph of $\rep$ simply as  $G(\TT)$. 
Gavril \cite{Gavril74} showed that a graph is chordal if and only if it is the vertex-intersection graph of subtrees of a tree. 
Throughout this work, we refer to the vertices of $T$ as \textit{nodes} to avoid possible confusion with the vertices of $G(\TT)$.

Let $G$ be a chordal graph with a representation $\rep$, and $j$ be a node of $T$. 
We denote as $\TT_j$ the set of subtrees in $\TT$ that contain the node $j$, i.e. $\TT_j \defined \set{T_i \in \TT | j \in V(T_i)}$.
Clearly, the set $\TT_j$ corresponds to a clique of $G$ that we will denote $V_j$.
It is also known that, conversely, if $K$ is a maximal clique of $G$ then $K = V_j$ for some node $j$ of $T$. 

Two sets of subtrees $\TT$ and $\TT'$ are \emph{equivalent} if $G(\TT) = G(\TT')$. 
Let $\TT$ be a set of subtrees of a tree $T$ and $e$ an edge of $T$. 
We denote by $\contracttte$ the set of subtrees of $\contract{T}{e}$ that is obtained by contracting the edge $e$ of every subtree in $\TT$ that contains $e$.
A set of subtrees is \emph{contraction-minimal} (or simply \emph{minimal}) if for every edge $e$ of $T$ we have $G(\contracttte) \neq G(\TT)$.

Throughout the rest of this work, $G$ is a chordal graph with vertex set $[n]$ and $m$ edges, $T$ is a tree on $t \leq n$ nodes and $\TT=\set{T_1, \ldots, T_n}$ is a set of subtrees of $T$ such that $G(\TT)=G$ and $\rep$ is contraction-minimal. We will adopt index $i\in [n]$ for vertices of $G$ and subtrees in $\TT$ whereas index $j\in [t]$ is used to denote the nodes of $T$. We also denote by $t_j \defined \abs{\TT_j}$ the number of subtrees in $\TT$ that contain the node $j$ of $T$.
The nodes of $T$ are numbered such that
\[
t_{t} = \max \set{t_j| j \in [t]}
\]
and all the other vertices are numbered according to a bottom-up order of $T$ where $t$ is the root.

In what follows, we first analyze contraction-minimal representations, then proceed into the analysis of the general case. 

\section{Contraction-minimal Representations}\label{sec:minimal}
We first show in Section \ref{subsec:linear} that the size of a contraction-minimal representation is at most $2m+n$. 
Based on this result, we derive a linear time algorithm to generate random chordal graphs. 
In Section \ref{subsec:experiments}, we conduct experiments to compare chordal graphs obtained by our algorithm to those in the literature. 
Our experimental study indicates that our method is faster than existing methods in the literature. 
Our algorithm produces graphs with 10000 vertices and $4.10^7$ edges in less than one second on a personal computer. 
In addition, the generated graphs do not have a restricted structure as far as the size of their maximal cliques are concerned. 
In Section \ref{subsec:connectivity}, we first show that minimal representations are exactly those representations of chordal graphs on a so-called clique tree of it. Subsequently, we analyze the  links between contraction-minimal representations and the connectivity of the corresponding chordal graphs. We show that the connectivity of a chordal graph is at most $n-t$ and derive from this result a modification of our random chordal graph generator which ensures $k$-connectivity of the produced chordal graph also in linear time.

\subsection{Chordal Graph Generation in Linear Time}\label{subsec:linear}
The following observation plays an important role in our proofs as well as in our chordal graph generation algorithm.
\begin{observation}\label{obs:noInclusions}
$\rep$ is minimal if and only if for every edge $jj'$ of $T$, none of $\TT_j$ and $\TT_{j'}$ contains the other 
(i.e., both $\TT_j \setminus \TT_{j'}$ and $\TT_{j'} \setminus \TT_j$ are non-empty).
\end{observation}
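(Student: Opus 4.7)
My plan is to analyze, for a fixed edge $e=jj'$ of $T$, exactly what $G(\contracttte)$ looks like relative to $G(\TT)$, and then show the two conditions in the observation are equivalent to $G(\contracttte) = G(\TT)$.

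First I would set up the comparison. Write $j^\ast$ for the node of $\contract{T}{e}$ obtained by merging $j$ and $j'$. Every subtree in $\TT$ that contains $j$ or $j'$ now contains $j^\ast$, so the set of subtrees meeting $j^\ast$ is exactly $\TT_j \cup \TT_{j'}$; for all other nodes of $T$ the sets $\TT_k$ are unchanged. From this it is immediate that $G(\contracttte) \supseteq G(\TT)$, and that the new edges, if any, are exactly those pairs $\{v_i,v_{i'}\}$ with $T_i \in \TT_j \setminus \TT_{j'}$ and $T_{i'} \in \TT_{j'} \setminus \TT_j$ (or vice versa) that did not already share a node of $T$.

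Now I would prove the two directions. For the ``only if'' direction (contrapositive): assume, WLOG, that $\TT_j \subseteq \TT_{j'}$. Then $\TT_j \setminus \TT_{j'} = \emptyset$, so by the description above no new edge can be created by contracting $e$, giving $G(\contracttte) = G(\TT)$ and hence showing $\rep$ is not minimal. For the ``if'' direction (again contrapositive): suppose both differences are non-empty, and pick $T_i \in \TT_j \setminus \TT_{j'}$ and $T_{i'} \in \TT_{j'} \setminus \TT_j$. The key observation, and essentially the only non-bookkeeping step, is that $T_i$ and $T_{i'}$ cannot share any node of $T$: removing $e$ disconnects $T$ into two components $T^j \ni j$ and $T^{j'} \ni j'$; since $T_i$ is connected, meets $j$, and misses $j'$, it contains neither $j'$ nor $e$ and therefore sits entirely inside $T^j$, and symmetrically $T_{i'} \subseteq T^{j'}$. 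Hence $v_iv_{i'} \notin E(G(\TT))$, whereas $v_iv_{i'} \in E(G(\contracttte))$ since both subtrees contain $j^\ast$. So $G(\contracttte) \neq G(\TT)$, which contradicts minimality when combined across all edges, but here directly shows $\rep$ is witnessed as non-minimal at $e$ in the contrapositive we want.

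I expect no serious obstacle: the whole argument rests on the ``subtree-plus-contraction'' bookkeeping above and the elementary fact that a connected subgraph of a tree avoiding an edge $e$ lies in one component of $T \setminus e$. The only thing that needs care is remembering that contraction can only add edges to the intersection graph (never remove them), so that the two conditions ``$G(\contracttte) = G(\TT)$'' and ``no new pair becomes adjacent'' coincide, which is what lets the bidirectional implication collapse to the two short cases I listed.
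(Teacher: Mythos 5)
Your proof is correct and follows essentially the same route as the paper: show that $\TT_j \subseteq \TT_{j'}$ forces $G(\contract{\TT}{jj'}) = G(\TT)$, and that two subtrees witnessing both non-empty differences are disjoint in $T$ yet meet after contracting $jj'$. You in fact spell out the one step the paper leaves implicit (that such a pair of subtrees cannot intersect, via the two components of $T$ minus the edge $jj'$), so nothing is missing.
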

\begin{proof}
Suppose that $\TT_j \subseteq \TT_{j'}$ for some edge $jj'$ of $T$ and let $v$ be the node obtained by the contraction of the edge $jj'$. 
Then every pair of subtrees that intersect on $j$ also intersect on $j'$.
Thus, they intersect also on $v$ (after the contraction of $jj'$).
Conversely, every pair of subtrees that intersect on $v$ contains at least one of $j,j'$.
By our assumption, they contain $j'$, thus they intersect on $j'$.
Therefore, a pair of subtrees intersect in $\TT$ if and only if they intersect in $\contract{\TT}{jj'}$.
Therefore, $G(\contract{\TT}{jj'})=G(\TT)$, thus $\rep$ is not contraction-minimal.

Now suppose that $\TT_j' \setminus \TT_j \neq \emptyset$ and $\TT_j \setminus \TT_j' \neq \emptyset$ for every edge $jj'$ of $T$.
Then, for every edge $jj'$ of $T$ there exists a subtree that contains $j$ but not $j'$ and another subtree that contains $j'$ but not $j$. 
These two subtrees do not intersect, but they intersect on $v$ after the contraction of $jj'$.
Therefore, $G(\contract{\TT}{jj'}) \neq G(\TT)$ for every edge $jj'$ of $T$.
We conclude that $\rep$ is contraction-minimal.
\end{proof}

\begin{lemma}\label{lem:minrep}
Let $\rep$ be a minimal representation of some chordal graph $G$ on $n$ vertices and $m$ edges. 
There exist numbers $s_1, \ldots, s_t$ such that
\begin{eqnarray}
\forall j \in [t]~~s_j & \geq & 1, \label{eqn:sPositive}\\
\sum_{j=1}^{t} s_j & = & n,\label{eqn:sSumsToN}\\
s_j \leq t_j & \leq & \sum_{i=j}^{t}s_i \label{eqn:boundsForT}\\
2 \sum_{j=1}^{t} s_j t_j - \sum_{j=1}^{t} s_j^2 & = & 2m + n\label{eqn:graphSize}
\end{eqnarray}
\end{lemma}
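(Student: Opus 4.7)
The plan is to define $s_j$ as the number of subtrees $T_i \in \TT$ whose highest-indexed node (call it $r(T_i)$) equals $j$, and then verify the four stated properties in turn. A key structural remark I will rely on is the following: because $T$ is rooted at $t$ with a bottom-up numbering, $r(T_i)$ is precisely the node of $T_i$ closest to the root, and every other node of $T_i$ lies in the subtree of $T$ hanging below $r(T_i)$. Consequently, if $j \in V(T_i)$ then $r(T_i) \geq j$, and if moreover $r(T_i) > j$, the unique $j$-to-$r(T_i)$ path in $T$ forces $T_i$ to contain the parent $p(j)$ of $j$.

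With this in hand, (\ref{eqn:sSumsToN}) and the lower bound in (\ref{eqn:boundsForT}) are immediate: the $n$ subtrees are partitioned by their root, and each subtree rooted at $j$ contains $j$. For the upper bound $t_j \leq \sum_{i \geq j} s_i$, every subtree containing $j$ has root $\geq j$, so it contributes to some $s_i$ with $i \geq j$. For (\ref{eqn:sPositive}), I argue by contradiction via Observation~\ref{obs:noInclusions}: if $s_j = 0$ for some $j < t$, then every subtree containing $j$ must also contain $p(j)$, giving $\TT_j \subseteq \TT_{p(j)}$ and contradicting minimality of $\rep$. For $j = t$, one has $s_t = t_t$ since every subtree containing $t$ has $t$ as its top node, and $t_t \geq 1$ because $t_t$ is maximal and subtrees are non-empty.

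The core step is (\ref{eqn:graphSize}). Here I would invoke the standard double-counting identity, obtained from the fact that $T_i \cap T_{i'}$ is a subtree of $T$ (possibly empty) together with $|V| - |E| = 1$ for any non-empty tree:
\[
m = \sum_{j=1}^t \binom{t_j}{2} - \sum_{jj' \in E(T)} \binom{\abs{\TT_j \cap \TT_{j'}}}{2}.
\]
The additional ingredient is the identity $\abs{\TT_j \cap \TT_{p(j)}} = t_j - s_j$ for each non-root $j$: among the $t_j$ subtrees containing $j$, the $s_j$ rooted at $j$ avoid $p(j)$ by the structural remark, while the remaining $t_j - s_j$ must contain $p(j)$.

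Substituting this and using that the $j = t$ term is vacuous because $s_t = t_t$ yields $(t_t - s_t)(t_t - s_t - 1) = 0$, one rewrites the above as
\[
2m \;=\; \sum_{j=1}^t \bigl[\, t_j(t_j - 1) \;-\; (t_j - s_j)(t_j - s_j - 1)\,\bigr].
\]
A routine expansion of each summand gives $2 t_j s_j - s_j^2 - s_j$, and combining with (\ref{eqn:sSumsToN}) delivers (\ref{eqn:graphSize}). I expect the only delicate point to be the bookkeeping of the edges-vs-non-root-nodes bijection $j \mapsto jp(j)$ when converting the edge-indexed sum into a node-indexed one; everything else is algebraic routine.
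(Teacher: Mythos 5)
Your proof is correct. Your $s_j$ --- the number of subtrees whose maximum-index node is $j$ --- coincides exactly with the paper's $s_j$: in the paper's pruning procedure, the subtrees removed at step $j$ are precisely those that have by then been reduced to the single node $j$, i.e.\ those with $\max V(T_i)=j$, and your verifications of (\ref{eqn:sPositive})--(\ref{eqn:boundsForT}) via Observation~\ref{obs:noInclusions} and the bottom-up numbering are the same arguments the paper makes. Where you genuinely diverge is (\ref{eqn:graphSize}). The paper counts, for each pruning step $j$, the edges of $G$ eliminated when the clique of $s_j$ simplicial vertices is removed, obtaining $e_j=\binom{s_j}{2}+s_j(t_j-s_j)$ directly; you instead invoke the global inclusion--exclusion identity $m=\sum_{j}\binom{t_j}{2}-\sum_{jj'\in E(T)}\binom{\abs{\TT_j\cap\TT_{j'}}}{2}$ (valid because each pairwise intersection of subtrees is again a subtree, hence contributes $\abs{V}-\abs{E}\in\set{0,1}$) together with the observation that $\abs{\TT_j\cap\TT_{p(j)}}=t_j-s_j$ for the parent $p(j)$ of a non-root node $j$. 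The two computations agree term by term, since $\binom{t_j}{2}-\binom{t_j-s_j}{2}=\binom{s_j}{2}+s_j(t_j-s_j)$, so both decompose $2m+n$ as $\sum_j(2s_jt_j-s_j^2)$. The paper's version is slightly more elementary in that it needs no Helly-type or Euler-characteristic input, only the per-step edge count; yours avoids having to argue that the eliminated vertices are simplicial (i.e.\ that the pruning yields a perfect elimination order), and the node-minus-edge identity you use is a reusable tool for other multiplicity computations on clique trees. Either way, all four relations are established.
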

\begin{proof}
Consider the following pruning procedure of $T$ that implies a perfect elimination order for $G$.
We first remove the leaf $j=1$ from $T$ and all the simplicial vertices of $G$ which are represented by the subtrees $T_i \in \TT$ that consist of the leaf $j=1$.
There is at least one such subtree by Observation \ref{obs:noInclusions}.
We continue in this way for every $j \in [t]$ until both $T$ and $G$ vanish.
Let $G^j$ be the remaining graph at step $j$ before node $j$ is removed, and $s_j$ be the number of simplicial vertices of $G^j$ eliminated with the removal of node $j$.
Clearly, the numbers $s_j$ satisfy relations (\ref{eqn:sPositive}) and (\ref{eqn:sSumsToN}).
Recall that $t_j = \abs{\TT_j}$.
To see that (\ref{eqn:boundsForT}) holds, observe that the number of subtrees removed at step $j$ is at most the number of subtrees containing node $j$.
Observe also that $t_j\leq n- \sum_{i=1}^{j-1}s_i$ as the subtrees eliminated at prior steps do not contain the node $j$ by the choice of the nodes to be removed at every step.

To show (\ref{eqn:graphSize}), let $e_j$ be the number of edges of $G$ that have been eliminated at phase $j$ of the pruning procedure (during which node $j$ of $\TT$ is removed). 
We recall that a clique of $s_j$ vertices is removed, each vertex of which is adjacent to $t_j-s_j$ other vertices of $G$. 
Therefore, $e_j = s_j (s_j-1)/2 + s_j (t_j-s_j)$, i.e., $2 e_j + s_j = 2 s_j t_j - s_j^2$. Summing up over all $j \in [t]$ we get
\[
2 \sum_{j=1}^{t} s_j t_j - \sum_{j=1}^{t} s_j^2 = \sum_{j=1}^{t} (2 e_j + s_j) = 2m + n.
\]
\end{proof}
We are now ready to prove the main result of this section.
\begin{theorem}\label{thm:MinimalUpperBound}
If $\rep$ is a minimal representation of some chordal graph $G$ on $n$ vertices and $m$ edges then
\[
\sum_{i=1}^n \abs{V(T_i)} \leq 2m + n.
\]
\end{theorem}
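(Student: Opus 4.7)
The plan is to reduce the claim to Lemma~\ref{lem:minrep} via two basic observations: a double-counting identity and a termwise inequality.

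First, I would rewrite the left-hand side by double counting node-subtree incidences. Since $T_i$ contains node $j$ precisely when $T_i \in \TT_j$, we have
\[
\sum_{i=1}^{n} \abs{V(T_i)} = \abs{\set{(i,j) : j \in V(T_i)}} = \sum_{j=1}^{t} \abs{\TT_j} = \sum_{j=1}^{t} t_j.
\]
So the goal becomes proving $\sum_{j=1}^{t} t_j \leq 2m + n$.

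Next, by equation (\ref{eqn:graphSize}) of Lemma~\ref{lem:minrep},
\[
2m + n = \sum_{j=1}^{t} \bigl(2 s_j t_j - s_j^2\bigr) = \sum_{j=1}^{t} s_j (2 t_j - s_j).
\]
So it suffices to establish the termwise bound $t_j \leq s_j(2t_j - s_j)$ for every $j \in [t]$. The key inputs here are the constraints $1 \leq s_j \leq t_j$ from (\ref{eqn:sPositive}) and (\ref{eqn:boundsForT}). Viewing $f(s) := s(2t_j - s) - t_j$ as a function of $s$, it is concave with maximum at $s = t_j$, and $f(1) = t_j - 1 \geq 0$ since $t_j \geq s_j \geq 1$. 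Hence $f(s_j) \geq 0$ on the allowed range $[1, t_j]$, giving the desired termwise inequality.

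Summing over $j$ and chaining the inequalities yields $\sum_i \abs{V(T_i)} = \sum_j t_j \leq \sum_j s_j(2t_j - s_j) = 2m + n$. The proof is essentially a bookkeeping exercise once Lemma~\ref{lem:minrep} is in hand; the only nontrivial point is recognizing that the pointwise bound $t_j \leq s_j(2 t_j - s_j)$ holds in the full admissible range of $s_j$, which I do not expect to be an obstacle since the endpoint check at $s_j = 1$ together with concavity suffices.
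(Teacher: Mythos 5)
Your proposal is correct and follows essentially the same route as the paper: both reduce $\sum_i \abs{V(T_i)}$ to $\sum_j t_j$ by double counting and then bound $t_j \leq 2s_jt_j - s_j^2$ termwise using $1 \leq s_j \leq t_j$ together with relation (\ref{eqn:graphSize}). The only cosmetic difference is that you verify the termwise inequality via concavity of the quadratic, while the paper chains the two elementary steps $t_j \leq s_j t_j$ and $s_j t_j \leq 2s_j t_j - s_j^2$.
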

\begin{proof}
We first note that $\sum_{i=1}^n \abs{V(T_i)} = \sum_{j=1}^{t} t_j$ 
since every node of every subtree $T_i$ contributes one to both sides of the equation. We conclude as follows using Lemma \ref{lem:minrep}:
\[
\sum_{j=1}^{t} t_j \leq \sum_{j=1}^{t}s_j t_j \leq 2 \sum_{j=1}^{t}s_j t_j - \sum_{j=1}^{t}s_j^2 = 2m + n
\]
where the first inequality and the last equality hold by relations (\ref{eqn:sPositive}) and (\ref{eqn:graphSize}) of Lemma \ref{lem:minrep} and the second inequality is obtained by replacing  $t_j$ with $t_j + (t_j-s_j)$ and noting that $t_j-s_j \geq 0$. 
\end{proof}

\newcommand{\alg}{\textsc{GenerateContractionMinimal}}
\newcommand{\func}{\textsc{NewNode}}

We now present algorithm $\alg$ that generates a random contraction-minimal representation together with the corresponding chordal graph 
where every contraction-minimal representation has a positive probability to be returned.
It creates a tree $T$ at random by starting from a single node and every time adding a leaf $j'$ adjacent to some existing node $j$ that is chosen uniformly at random. 
Every time a node $j'$ is added, the algorithm performs the following: 
a) a non-empty set of subtrees consisting of only $j'$ is added to $\TT$, 
b) a proper subset of the subtrees in $\TT_j$ (i.e., those containing $j$) is chosen at random and every subtree of it is extended by adding the node $j'$ and the edge $jj'$, 
c) the graph $G$ is extended to reflect the changes in $\TT$. 
A pseudo code of the algorithm is given in Algorithm \ref{alg:ContractionMinimal}.

\alglanguage{pseudocode}
\begin{algorithm}[h]
\caption{$\alg$}\label{alg:ContractionMinimal}
\begin{algorithmic}[1]
\Require{$n \geq 1$}
\Ensure{A contraction-minimal representation $\rep$ with $\abs{\TT}=n$, and $G=G(\TT)$.}
\State $\TT \gets \emptyset$.
\State $G \gets (\emptyset, \emptyset)$.
\State $j' \gets \Call{\func}{}$.
\State $T \gets (\set{j'},\emptyset)$.
\While {$\abs{\TT} < n$}
\State Pick a node $j$ of $T$ uniformly at random. \label{lin:loopStart}
\State $j' \gets \Call{\func}{}$.
\State $T \gets (V(T) + j',~E(T) + jj')$.
\State Pick a proper subset $\TT'$ of $\TT_j$ at random.\label{lin:chooseProper}
\ForAll{$T_i \in \TT'$}
\State $T_i \gets (V(T_i) + j',~E(T_i) + jj')$.
\State $E(G) \gets E(G) \cup \set{i} \times V_{j'}$. 
\EndFor
\EndWhile
\Return $(\rep, G)$.
\Statex

\Function{\func}{}
\State Pick a number $k \in [n-\abs{\TT}]$ at random.\label{lin:PickANumber}
\State $j' \gets $ a new node.
\State $U \gets (\set{j'},\emptyset)$. \Comment{A tree with a single vertex}
\State $\TT \gets \TT \cup k$ copies of $U$.
\State $V_{j'} \gets $ a clique on $k$ vertices.
\State $G \gets G \cup V_{j'}$.
\State \Return $j'$.
\EndFunction

\end{algorithmic}
\end{algorithm}

\begin{theorem}\label{thm:linear}
Algorithm {\alg} generates a chordal graph in linear time. Moreover, it generates any chordal graph on $n$ vertices with strictly positive probability. 
\end{theorem}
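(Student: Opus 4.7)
The proof has two parts: the running time bound and the positive-probability claim. For the running time, the key idea is that every cost incurred by $\alg$ is charged either to a subtree-node incidence, a vertex of $G$, or an edge of $G$, and the first quantity is bounded by $2m+n$ via Theorem \ref{thm:MinimalUpperBound}. Specifically, selecting the proper subset in line \ref{lin:chooseProper} costs $O(t_j)$, summing to $\sum_j t_j \leq 2m+n$; extending the chosen subtrees costs $O(|\TT'|)$ per iteration, summing to $\sum_i(|V(T_i)|-1) \leq 2m$; creating the clique in \func{} and the edges from $\TT'$ to $V_{j'}$ costs $O(1)$ per added edge of $G$, totaling $O(m)$. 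The remaining $O(1)$ per-iteration costs sum to $O(n)$, giving an overall complexity of $O(m+n)$.

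For the positive-probability claim, the plan is to exhibit, for any chordal graph $G$, a sequence of algorithmic choices that produces $G$. First, take any subtree intersection representation of $G$ (which exists by Gavril's theorem) and iteratively contract edges violating Observation \ref{obs:noInclusions} to obtain a minimal $\rep$ on some tree with $t$ nodes. Next, root $T$ at any node $r$ and enumerate its nodes in pre-order as $j_1, j_2, \ldots, j_t$; this is a valid construction order for $\alg$, since for every $\ell \geq 2$ the $T$-parent $p(j_\ell)$ is added at an earlier step. I would then direct the algorithm, at iteration $\ell$, to pick $j = p(j_\ell)$ in line \ref{lin:loopStart}, to pick $k = |\TT_{j_\ell} \setminus \TT_{p(j_\ell)}|$ inside \func{}, and to pick $\TT' = \TT_{p(j_\ell)} \cap \TT_{j_\ell}$ in line \ref{lin:chooseProper}. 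Observation \ref{obs:noInclusions} applied to the edge $p(j_\ell)j_\ell$ ensures that both $\TT_{j_\ell} \setminus \TT_{p(j_\ell)} \neq \emptyset$ (so $k \geq 1$, and $k \leq n - |\TT|$ since the target has $n$ subtrees in total) and $\TT_{p(j_\ell)} \setminus \TT_{j_\ell} \neq \emptyset$ (so $\TT'$ is a proper subset of $\TT_{p(j_\ell)}$). Each such choice has strictly positive probability, and the product over the $t$ steps is positive.

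The main technical obstacle is the inductive invariant needed to glue the two parts together: after iteration $\ell$, the algorithm's current representation must coincide with the restriction of the target $\rep$ to the nodes $\set{j_1, \ldots, j_\ell}$. This is needed in order to justify that the algorithm's current $\TT_{p(j_\ell)}$ matches the target's $\TT_{p(j_\ell)}$, so that the prescribed proper subset is actually a selectable object. The invariant follows from the pre-order property: every subtree of the target whose root (closest node to $r$) lies in $\set{j_1, \ldots, j_\ell}$ has been initiated in an earlier \func{} call, and extended exactly along the portion of the target subtree already induced by $\set{j_1, \ldots, j_\ell}$, because each of its constituent nodes is adjacent in $T$ to its already-added parent at the iteration when it is introduced.
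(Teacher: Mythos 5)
Your positive-probability argument is sound and is essentially the paper's own: both fix a target minimal representation, build its tree one node at a time so that each new node attaches to an already-present one, and use Observation \ref{obs:noInclusions} to certify that the prescribed number $k$ of new trivial subtrees is at least $1$ (and at most $n-\abs{\TT}$) and that the prescribed $\TT'$ is a proper subset of $\TT_j$; your pre-order bookkeeping and sub-representation invariant are the same induction the paper carries out iteration by iteration.

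The running-time half, however, has a genuine gap in the charge for Line \ref{lin:chooseProper}. You assert that each invocation costs $O(t_j)$ and that these charges sum to $\sum_{j=1}^{t} t_j \leq 2m+n$, but a node $j$ incurs this charge once \emph{per child}, not once overall: the true total is $\sum_{j} c_j t_j$, where $c_j$ is the number of iterations that pick $j$ as the parent, and this is not $O(m+n)$. For instance, run $\alg$ so that $T$ is a star whose center is born with $\sqrt{n}$ trivial subtrees (a $K_{\sqrt{n}}$ in $G$) and whose roughly $n$ leaves each receive one new trivial subtree with $\TT'=\emptyset$; the result is contraction-minimal, $m=\Theta(n)$ and $\sum_j t_j = n$, yet $\sum_j c_j t_j=\Theta(n^{3/2})$. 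So if Line \ref{lin:chooseProper} really costs $\Theta(t_j)$ (say, one coin flip per element of $\TT_j$), linearity fails; the repair is to sample the proper subset in time proportional to its \emph{cardinality}, after which this cost is absorbed into your extension charge and the $O(t_j)$ term must be dropped. This is in effect what the paper does: it charges all non-{\func} work of an iteration to the edges of $G$ created between $V_j$ and the new clique at $j'$, never to $\abs{\TT_j}$. A second, smaller omission: your bounds $\sum_j t_j\leq 2m+n$ and $\sum_i\abs{V(T_i)}\leq 2m+n$ invoke Theorem \ref{thm:MinimalUpperBound}, which applies only to \emph{minimal} representations, so you must first verify that the output of $\alg$ is contraction-minimal. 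This follows from Observation \ref{obs:noInclusions} because only a proper subset of $\TT_j$ is extended to $j'$ while $j'$ is born with at least one new subtree; the paper proves exactly this before counting operations, and your proof needs it too.
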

\begin{proof}
The algorithm creates the tree $T$ incrementally, and the subtrees of $\TT$ are created and extended together with $T$. 
More precisely, the set of subtrees containing a node of $T$ is not altered after a newer node is created.
Note that, however, the subtrees themselves might be altered to contain newer nodes.
Consider an edge $jj'$ of $T$ where $j'$ is newer than $j$.
The sets $\TT_j$ and $\TT_{j'}$ of the subtrees containing $j$ and $j'$ respectively, are established by the end of the iteration that creates $j'$.
Since only a proper subset of $\TT_j$ is chosen to be extended to $j'$, at least one subtree in $\TT_j$ does not contain $j'$.
Furthermore, since $j'$ is created with a non-empty set of subtrees containing it, and none of these subtrees may contain $j$,
there is at least one subtree in $\TT_{j'}$ that does not contain $j$.
Therefore, by Observation \ref{obs:noInclusions}, we conclude that $\rep$ is contraction-minimal.

We proceed with the running time of the algorithm.
It is well known that the addition of a single node and the addition of a single edge to a graph can be done in constant time.
We observe that the number of operations performed by {\func} is $k + \binom{k}{2} = \abs{V_{j'}}+\abs{E(V_{j'})}$.
Therefore, the number of operations performed in all invocations of $\func$ is $n + \sum_{j' \in V(T)} \abs{E(V_{j'})}$.
Let $jj'$ be the edge added to $T$ at some iteration of $\alg$.
We now observe that the number of other operations (i.e., except the invocation of $\func$) performed during this iteration is 
exactly $\abs{E(G) \cap (V_j \times V_{j'})}$. We conclude that the number of operations of the algorithm is proportional to $n+\abs{E(G)}$.

Consider a contraction-minimal representation $\repbar$ with $\abs{\bar{\TT}}=n$ and let $\bar{t} = \abs{\bar{T}}$.
It remains to show that the algorithm returns $\repbar$ with a positive probability.

We show by induction on $h$, that at the beginning of iteration $h$ (the $h$'th time the algorithm executes Line \ref{lin:loopStart}) $\rep$ is a sub-representation of $\repbar$ with positive probability.
That is, $T$ is a subtree of $\bar{T}$ and $\TT$ consists of the non-empty intersections of the subtrees of $\bar{\TT}$ with $T$ (formally, $\TT = \set{\bar{T}_i[V(T)]~\vert~\bar{T}_i \in \bar{\TT}} \setminus \set{(\emptyset, \emptyset)}$) with positive probability. 
Let $\bar{j'}$ be a node of $\bar{T}$, and let $\bar{k} = \abs{\bar{\TT}_{\bar{j'}}}$. Clearly, $\bar{k} \leq n$. 
With probability $1/n$ the algorithm will start by creating a node with $\bar{k}$ trivial subtrees in which case $\rep$ is a sub-representation of $\repbar$.
Therefore, the claim holds for $h=1$.
Now suppose that $\rep$ is a sub-representation of $\repbar$ at the beginning of iteration $h$ with probability $p>0$.
If $T=\bar{T}$ then $\rep = \repbar$, thus $\abs{\TT}=\abs{\bar{\TT}}=n$ and the algorithm does not proceed to iteration $h$. 
Otherwise, $T$ is a proper subtree of $\bar{T}$, i.e. $\bar{T}$ contains an edge $jj'$ with $j \in V(T)$ and $j' \notin V(T)$.
At iteration $h$, $j$ will be chosen with probability $1/\abs{V(T)} \geq 1/n$ by the algorithm and the edge $jj'$ will be added to $T$, ensuring that $T$ is a subtree of $\bar{T}$
with probability at least $p/n$ at the end of iteration $h$.
The number $\bar{k}$ of subtrees in $\bar{\TT}$ that contain $j'$ but not $j$ is at most $n - \abs{\TT}$.
Since $\repbar$ is contraction-minimal, we have $\bar{k} \geq 1$.
Therefore, $\bar{k} \in [n-\abs{\TT}]$ and the algorithm creates $\bar{k}$ trivial subtrees in $j'$ with probability $1/(n - \abs{\TT}) > 1/n$.
Since $\repbar$ is contraction-minimal, the set of subtrees in $\bar{\TT}$ that contain both $j$ and $j'$ is a proper subset of $\bar{\TT}_j = \TT_j$. 
The algorithm chooses this proper subset with probability $1/(2^{\abs{\TT_j}}-1)$ and adds the edge $jj'$ to each of them.
We conclude that at the end of iteration $h$ (thus at the beginning of iteration $h+1$), 
$\rep$ is a sub-representation of $\repbar$ with probability at least $p/(n^2 2^n) > 0$.
\end{proof}

\subsection{Experimental studies}\label{subsec:experiments}

In this section, we present our experimental results to demonstrate the computational efficiency of $ \alg $ and 
to provide some insight into the distribution of chordal graphs it generates. 
We implemented the presented algorithm in C++, and executed it on a laptop computer with 2.00-GHz Intel Core i7 CPU.
The implementation of the algorithm spans only 70 lines of C++ code.
Our source code is available in \\
\url{http://github.com/cmshalom/ChordalGraphGeneration}.

Following the approach of works \cite{pemmaraju2005approximating,seker17,seker2018generation}, we consider the characteristics of the maximal cliques of the returned graph. 
Table \ref{tab:expresults} provides a summary of the computational results of our algorithm. 
The first column reports the number of vertices $n$. 
For every value of $n$, we use four different average edge density values of 0.01, 0.1, 0.5, and 0.8, 
where edge density is defined as $\rho=\frac{m}{n(n-1)/2}$ with $m$ being the number of edges in the graph. 
For each pair of values $n, \rho$, we performed ten independent runs and reported the average values across those ten runs.  
The table exhibits the number of connected components, the number of maximal cliques, and the minimum, maximum, and average size of the maximal cliques along with their standard deviation. 
The rightmost column shows the time in seconds that it takes the algorithm takes to construct one graph. 
In order to achieve the desired edge density values, we discarded the graphs that turned out to be outside the range $[(1-\epsilon)\rho,(1+\epsilon)\rho]$, for $\epsilon=0.05$. 
For $\rho \leq 0.1$, we adjusted the upper bound at Line \ref{lin:PickANumber} in function $\func$ so that graphs with small edge densities are obtained more probably.  

\begin{table}[htbp]
	\centering
	\caption{Experimental results of algorithm $\alg$}
	\scalebox{0.85}{
	\begin{tabular}{ccccccccc}
		\toprule
	
		\parbox[t]{1.7cm}{\centering $\boldsymbol{n}$}  &  \parbox[t]{1.7cm}{\centering \textbf{Density}} &  \parbox[t]{1.7cm}{\centering \textbf{\#\\ conn.\\comp.s}} & \parbox[t]{1.7cm}{\centering \textbf{\# \\maximal\\cliques}} & \parbox[t]{1.7cm}{\centering \textbf{Min \\clique \\size}} & \parbox[t]{1.7cm}{\centering \textbf{ Max\\ clique \\size}} & \parbox[t]{1.7cm}{\centering \textbf{Avg \\clique \\size}} & \parbox[t]{1.7cm}{\centering \textbf{Sd of\\clique \\sizes}} & \parbox[t]{1.7cm}{\centering \textbf{Time to\\build}}\\
		\midrule

        \multirow{4}[2]{*}{1000} & 0.010 & 24.6  & 422.2 & 1.0   & 17.2  & 5.8   & 2.9   & 0.002 \\
          & 0.100 & 1.0   & 62.8  & 7.3   & 125.0 & 43.3  & 26.8  & 0.003 \\
          & 0.500 & 1.0   & 9.5   & 92.5  & 520.0 & 255.7 & 127.1 & 0.008 \\
          & 0.780 & 1.0   & 6.5   & 167.2 & 767.4 & 393.1 & 201.0 & 0.010 \\
        \midrule
        \multirow{4}[2]{*}{2500} & 0.010 & 6.3   & 582.8 & 1.1   & 38.8  & 12.1  & 6.7   & 0.005 \\
          & 0.100 & 1.0   & 70.7  & 12.4  & 311.9 & 101.4 & 65.8  & 0.010 \\
          & 0.507 & 1.0   & 10.0  & 227.5 & 1357.7 & 617.6 & 330.4 & 0.030 \\
          & 0.808 & 1.0   & 7.2   & 379.9 & 1997.9 & 931.1 & 535.3 & 0.049 \\
        \midrule
        \multirow{4}[2]{*}{5000} & 0.010 & 1.7   & 703.6 & 1.8   & 75.7  & 21.8  & 12.9  & 0.014 \\
          & 0.102 & 1.0   & 78.5  & 22.7  & 635.1 & 196.1 & 131.0 & 0.035 \\
          & 0.503 & 1.0   & 9.7   & 457.6 & 2479.1 & 1236.1 & 636.2 & 0.166 \\
          & 0.796 & 1.0   & 7.2   & 775.5 & 3986.7 & 1914.0 & 1058.4 & 0.204 \\
        \midrule
        \multirow{4}[2]{*}{10000} & 0.010 & 1.4   & 825.0 & 2.8   & 146.4 & 40.4  & 25.3  & 0.036 \\
          & 0.100 & 1.0   & 90.0  & 32.0  & 1385.9 & 356.9 & 266.5 & 0.132 \\
          & 0.499 & 1.0   & 9.9   & 829.4 & 5343.8 & 2407.2 & 1366.3 & 0.659 \\
          & 0.797 & 1.0   & 9.2   & 902.7 & 8125.6 & 3065.4 & 2243.1 & 0.952 \\
        \bottomrule
	\end{tabular}%
}
	\label{tab:expresults}
\end{table}%

Algorithm $\alg$ produces connected chordal graphs for $\rho \geq 0.1 $. 
When the average edge density is 0.01, the average number of connected components decreases as $n$ increases. 
A minimum clique size of 1 for $\rho = 0.01$ and $n=1000$ implies that the disconnectedness of the graphs is due to the existence of isolated vertices. 
As for the running time of the algorithm, the linear time complexity shown in Theorem \ref{thm:linear} clearly manifests itself in the amount of time it takes to construct a chordal graph. 
The rightmost column of Table \ref{tab:expresults} shows that our algorithm constructs a chordal graph in less than one second on average, 
even when $n=10000$ and $\rho=0.797$, i.e., $m>4 \cdot 10^7$. 

We compare our results to those of the two other methods from the literature. 
The first one is algorithm {\cg} proposed in \c{S}eker \emph{et al.}'s work \cite{seker2018generation}, 
which is based on the subtree intersection representation of chordal graphs. 
This algorithm is presented along with three alternative subtree generation methods. 
Here, we only consider algorithm {\cg} together with the subtree generation method called {\gs}, 
because this one is claimed to stand out as compared to the other presented methods, as far as the distribution of maximal clique sizes are concerned. 
The second algorithm we compare to is Andreou \emph{et al.}'s algorithm \cite{andreou2005generating}. 
This algorithm is also used in \cite{seker2018generation} for comparison purposes, and we refer to the implementation therein. 
In order to obtain results comparable to those given in \cite{seker2018generation}, we use the same $n, \rho$ value pairs in our experiments.

We now compare the results in Table \ref{tab:expresults} to those reported in \cite{seker2018generation} for algorithm {\cg} with {\gs} and Andreou \emph{et al.}'s algorithm.
We observe that the number of maximal cliques of the graphs produced by $\alg$ is usually lower than the others,
and inevitably, their average clique sizes are higher than the others.
The most notable difference of our algorithm from the others is its running time. 
Whereas a running time analysis of Andreou \emph{et al.}'s algorithm has not been given in \cite{andreou2005generating}, 
the average running time of our implementation of their algorithm is of 477.1 seconds per generated graph, 
excluding graphs on 10000 vertices for which the algorithm was extremely slow. 
The average running times of our implementation of algorithm {\cg}  is 93.2, 4.7, 182.6 seconds with the subtree generation methods {\gs}, {\cn}, and {\pt}, respectively. 
Algorithm $\alg$, however, achieves an average running time of 0.14 seconds.

In our next set of experimental results, we investigate the distribution of the sizes of maximal cliques to get some visual insight into the structure of the chordal graphs produced. Figure \ref{fig:histograms_1000} shows the average number of maximal cliques across ten independent runs for $n=1000$ vertices and four edge density values.
The figure is comprised of three rows, each row describing the result of the experiments on one algorithm;
algorithms $\alg$, {\cg} combined with {\gs} method, and the implementation of Andreou \emph{et al.}'s algorithm \cite{andreou2005generating} as given in \cite{seker2018generation}. 
Every row consists of four histograms corresponding to four different average edge density values  $\rho = 0.01, 0.1, 0.5$, and $0.8$.
The bin width of the histograms is taken as five; that is, frequencies of maximal clique sizes are summed over intervals of width five (from one to five, six to ten, etc.)
and divided by the number of runs (i.e., ten) to obtain the average values. 
For a given $n$ and average edge density value, we keep the ranges of $ x $-axes the same in order to make the histograms comparable. 
The $y$-axes, however, have different ranges because maximum frequencies in histograms vary considerably. 

\begingroup

\begin{figure}[htbp]
\centering

\subfigure[Results from algorithm$\alg$.]
{\includegraphics[width=\textwidth]{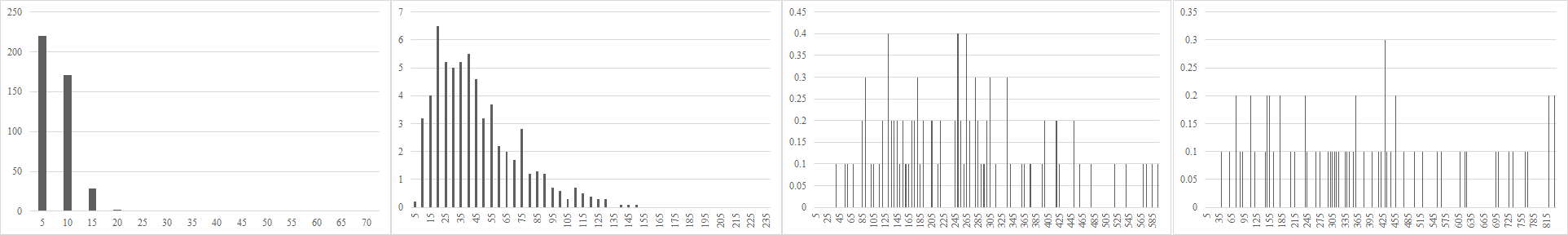}}
\label{fig:contractionMinimal_1000}

\subfigure[Results from algorithm {\cg} with {\gs} method.] 
{\includegraphics[width=\textwidth]{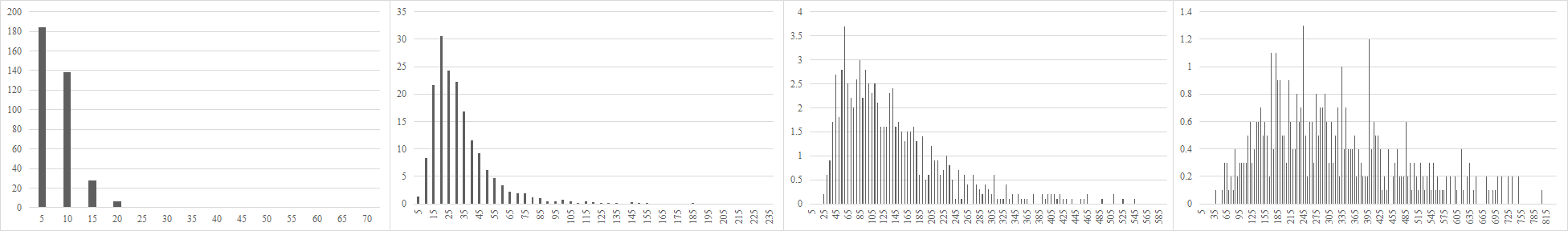}}
\label{fig:growingSubtree_1000}

\subfigure[Results from Andreou \emph{et al.}'s method \cite{andreou2005generating}.]
{\includegraphics[width=\textwidth]{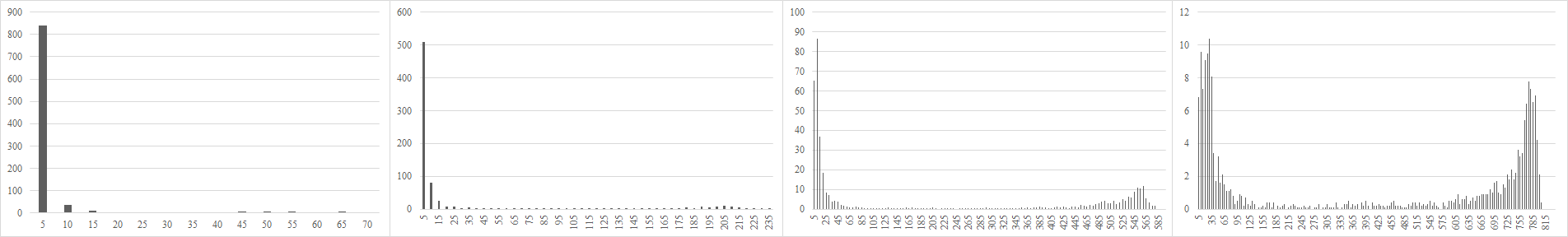}}
\label{fig:peoBased_1000}

\caption{Histograms of maximal clique sizes for $ n = 1000 $ and average edge densities 0.01, 0.1, 0.5, and 0.8 (from left to right).}
\label{fig:histograms_1000}
\end{figure}

\endgroup

The histograms in Figure \ref{fig:contractionMinimal_1000} reveal that the sizes of maximal cliques of graphs produced by our algorithm are not clustered around specific values; they are distributed fairly over the range. 
The shapes of the histograms for average edge densities 0.01 and 0.1 are similar for our algorithm and algorithm {\cg}, as we observe from the first half of Figure \ref{fig:contractionMinimal_1000} and \ref{fig:growingSubtree_1000}. 
For higher densities (as we proceed to the right), the sizes of maximal cliques are distributed more uniformly in the graphs generated by our algorithm; there is no obvious mode of the distribution. 
In the graphs produced by Andreou \emph{et al.}'s method, the vast majority of maximal cliques have up to 15 vertices when the average edge densities are 0.01 and 0.1. 
As we increase the edge density, frequencies of large-size maximal cliques become noticeable relative to the dominant frequencies of small-size maximal cliques. 
In any case, the range outside its extremes is barely used.

For brevity, we do not present the histograms for every $n$-value we consider in this study. 
Having presented the histograms for the smallest value of $n$ we consider, next we provide the set of results for a larger value of $n$. 
The implementation of Andreou \emph{et al.}'s algorithm \cite{andreou2005generating} turned out to be too slow to allow testing graphs on 10000 vertices in a reasonable amount of time. 
In order to present a complete comparison with the methods we look at from the literature, we provide the results for the next largest value of $n$ in Figure \ref{fig:histograms_5000}. 
From the histograms in Figure \ref{fig:histograms_5000}, we observe that the general distribution of maximal clique sizes do not change much with the increase in the number of vertices. 
Maximal clique sizes of chordal graphs produced by our algorithm are not confined to a limited area; they are distributed fairly over the range. 

To summarize, our experiments show that $\alg$ is by far faster than the existing methods in practice, in accordance with our theoretical bounds. 
Moreover, our inspection of the generated graphs in terms of their maximal cliques shows that the algorithm produces chordal graphs with no restricted structure. 

\begingroup

\begin{figure}[htbp]
\centering
\subfigure[Results from algorithm $\alg$.]
{\includegraphics[width=\textwidth]{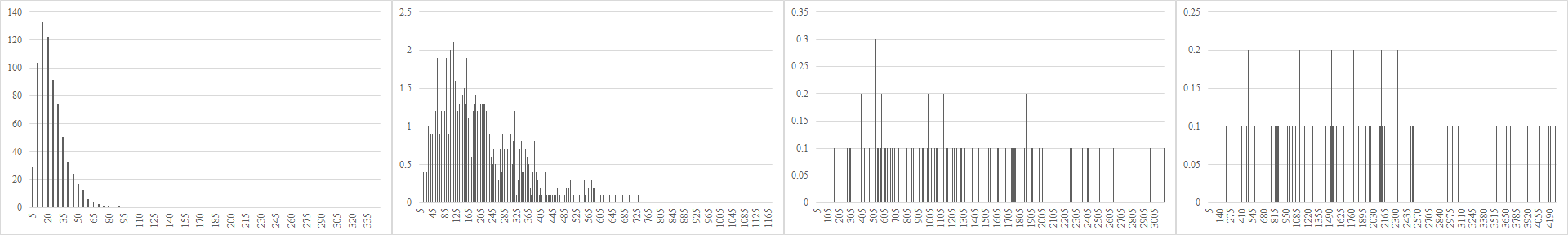}}
\label{fig:contractionMinimal_5000}

\subfigure[Results from algorithm {\cg} with {\gs} method.] 
{\includegraphics[width=\textwidth]{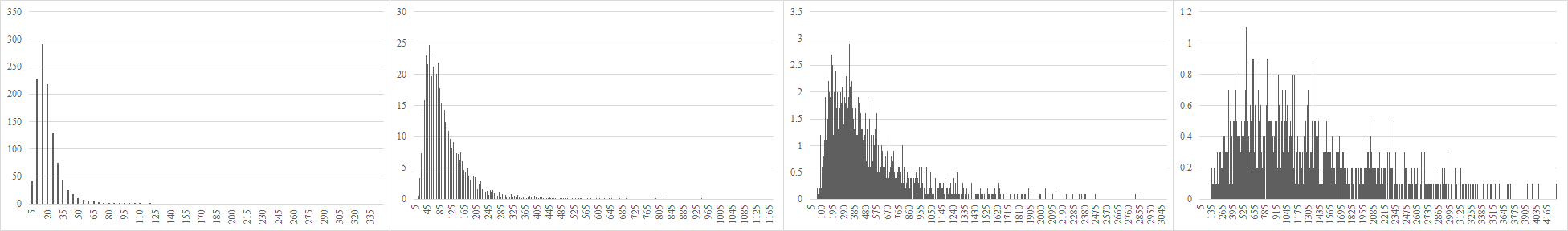}}
\label{fig:growingSubtree_5000}

\subfigure[Results from Andreou \emph{et al.}'s method \cite{andreou2005generating}.]
{\includegraphics[width=\textwidth]{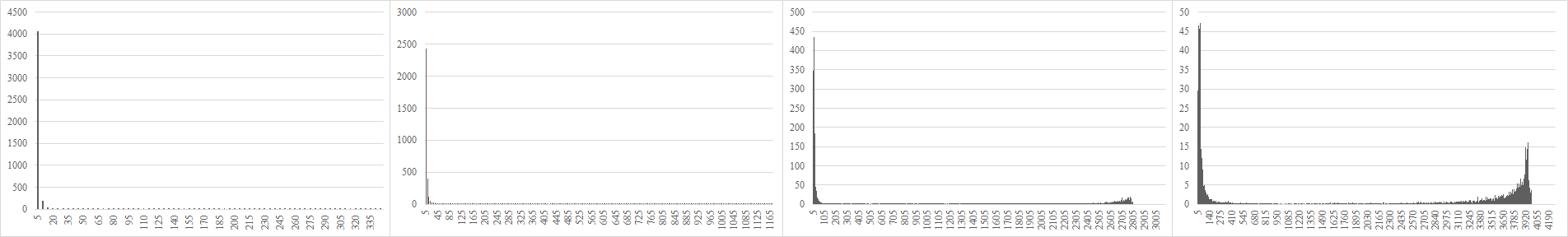}}
\label{fig:peoBased_5000}

\caption{Histograms of maximal clique sizes for $ n = 5000 $ and average edge densities 0.01, 0.1, 0.5, and 0.8 (from left to right).}
\label{fig:histograms_5000}

\end{figure}

\subsection{Structural Consequences}\label{subsec:connectivity}
By further analysis of contraction-minimal representations of chordal graphs, we show in this section that minimal representations are exactly those representations of chordal graphs on a so-called clique tree of it. Consequently, they provide us with important information about the connectivity properties of chordal graphs they represent.

\runningtitle{Maximal cliques and clique trees}
A \emph{clique tree} of a chordal graph $G$ is a tree $T$ such that a) there is a bijection between the maximal cliques of $G$ and the nodes of $T$ and
for every vertex $i \in V(G)$, the nodes of $T$ corresponding to maximal cliques of $G$ that contain $i$ form a subtree of $T$. 
It is well-known that a chordal graph on $n$ vertices has at most $n$ maximal cliques \cite{Golumbic:2004:AGT:984029}, 
thus every clique tree of it has at most $n$ nodes. 

It turns out that there is a one-to-one correspondence between minimal representations of a chordal graph and its clique trees as shown in the following.

\begin{proposition}\label{prop:mincliquetree}
$\rep$ is a minimal representation of a chordal graph $G$ if and only if $T$ is a clique tree of $G$ and every vertex $i$ of $G$ is represented by the subtree $T_i$ of $T$ induced by the nodes representing the maximal cliques of $G$ containing $i$.
\end{proposition}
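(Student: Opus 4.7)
I plan to prove the proposition by a two-direction argument, with Observation \ref{obs:noInclusions} as the main tool.

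For the easier direction, suppose $T$ is a clique tree of $G$ and each subtree $T_i$ is the induced subtree on the nodes $j$ whose corresponding maximal clique $K_j$ contains $i$. I first verify that $G(\TT) = G$: vertices $i, i'$ are adjacent in $G(\TT)$ iff some node $j$ satisfies $i, i' \in K_j$, which by the definition of a clique tree happens iff $i$ and $i'$ share a maximal clique, i.e., are adjacent in $G$. For minimality, I apply Observation \ref{obs:noInclusions}: for any edge $jj'$ of $T$, the sets $\TT_j$ and $\TT_{j'}$ equal the two distinct maximal cliques $K_j$ and $K_{j'}$, so neither contains the other.

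For the harder direction, suppose $\rep$ is minimal. I need to establish: (i) $V_j \neq V_{j'}$ for distinct $j, j'$; (ii) each $V_j$ is a maximal clique of $G$; and (iii) each $T_i$ is precisely the subtree of $T$ induced by the nodes corresponding to maximal cliques containing $i$. For (i), if $V_j = V_{j'}$ for some $j \neq j'$, then every subtree containing $j$ also contains $j'$ and vice versa, and by the subtree property must contain the entire $j$-to-$j'$ path. Hence for any edge $j^* j^{**}$ on this path we have $\TT_{j^*} = \TT_{j^{**}}$, contradicting Observation \ref{obs:noInclusions}.

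For (ii), assume for contradiction that $V_j$ is not maximal; then there is a vertex $i \notin V_j$ such that $V_j \cup \{i\}$ is a clique. The subtree $T_i$ does not contain $j$ yet intersects every $T_{i'}$ with $i' \in V_j$. Since $T_i$ lies entirely in one component of $T - j$, let $j^*$ be the neighbor of $j$ in that component. Then for every $i' \in V_j$ the path from $j$ to $T_i$ starts with the edge $jj^*$ and lies in $T_{i'}$ by the subtree property, so $j^* \in V(T_{i'})$, yielding $V_j \subseteq V_{j^*}$. This makes $\TT_j \setminus \TT_{j^*} = \emptyset$ on the edge $jj^*$, again contradicting Observation \ref{obs:noInclusions}. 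Combining (i) and (ii) with the fact recalled in the preliminaries that every maximal clique of $G$ equals $V_j$ for some node $j$, we obtain a bijection between the nodes of $T$ and the maximal cliques of $G$. Then (iii) is immediate, since $V(T_i) = \{ j \in [t] : i \in V_j \}$ is by construction the set of nodes corresponding to maximal cliques containing $i$, and this set already forms a subtree; this simultaneously confirms that $T$ is a clique tree of $G$.

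The main obstacle is step (ii): ruling out non-maximal $V_j$ in a minimal representation. The careful use of the subtree (Helly-type) property to locate a neighbor $j^*$ whose clique $V_{j^*}$ contains $V_j$ is the crux, since contracting the edge $jj^*$ would then preserve the intersection graph.
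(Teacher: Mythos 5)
Your proof is correct and follows essentially the same route as the paper: both directions rest on Observation \ref{obs:noInclusions}, and in the backward direction you establish maximality of each $V_j$ by locating a neighbor $j^*$ with $V_j \subseteq V_{j^*}$ (a step the paper states only tersely) and distinctness of the $V_j$'s via a path argument. One minor overstatement: in step (i) you only get the one-sided containment $\TT_j \subseteq \TT_{j^*}$ for the first edge of the $j$--$j'$ path (intermediate path nodes may lie in additional subtrees, so $\TT_{j^*} = \TT_{j^{**}}$ need not hold for every path edge), but that containment alone already contradicts Observation \ref{obs:noInclusions}, so the argument stands.
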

\begin{proof}
Let $T$ be a clique tree of $G$ and $\TT = \set{T_1, \ldots T_n}$ where $T_i$ is the subtree of $T$ corresponding to the maximal cliques containing the vertex $i$ of $G$. We claim that $\rep$ is a minimal representation of $G$.
Indeed, consider an edge $e=jj'$ of $T$. 
Since $j$ and $j'$ correspond to distinct maximal cliques $V_j$ and $V_{j'}$ of $G$, neither $V_j  \setminus V_{j'}$ nor $V_{j'}  \setminus V_{j}$ is empty.
That is, there is a subtree in $\TT$ that contains $j$ but does not contain $j'$ and vice versa. 
By Observation \ref{obs:noInclusions}, $\rep$ is a minimal representation of $G$.

Let now $\rep$ be a minimal representation of $G$.
If $T$ consists of a single node, then $G$ is a complete graph and it has only one maximal clique that corresponds to the single node of $T$.
Therefore, $T$ is a clique tree of $G$.
Otherwise, let $e=jj'$ be an edge of $T$.
By Observation \ref{obs:noInclusions}, neither $\TT_j \setminus \TT_{j'}$ nor $\TT_{j'} \setminus \TT_j$ is empty,
i.e., neither $V_j \setminus V_{j'}$ nor $V_{j'} \setminus V_j$ is empty. 
Therefore both $V_j$ and $V_{j'}$ are maximal cliques of $G$.
Since $e$ is an arbitrary edge of $T$, we conclude that $V_j$ is a maximal clique of $G$ for every node $j$ of $T$.
Let now $V_j$ and $V_{j'}$ be two maximal cliques of $G$ corresponding to two (not necessarily adjacent) nodes $j$ and $j'$ in $T$, where $j \neq j'$. Let $j''$ be the node adjacent to $j$ on the path between $j$ and $j'$ in $T$ (with possibly $j''=j'$). By Observation \ref{obs:noInclusions}, there is a subtree $T_i \in \TT$ such that $T_i \in \TT_j$ and $T_i \notin \TT_{j''}$.
Therefore, $T_i \notin \TT_{j'}$, thus $V_j \neq V_{j'}$.
We conclude that the maximal cliques corresponding to nodes of $T$ are distinct.
Clearly, the subtree $T_i$ that represents a vertex $i$ of $G$ corresponds to the set of maximal cliques that contain $i$. 
Therefore, $T$ is a clique tree as claimed.
\end{proof}

As the number of maximal cliques is a graph invariant, Proposition \ref{prop:mincliquetree} implies that making any arbitrary representation of a chordal graph minimal by repetitively contracting edges always yields a minimal representation on a tree with the same number of nodes (which is the number of maximal cliques of $G$).

\begin{corollary}\label{cor:minimal}
Given a chordal graph $G$, every minimal representation $\rep$ of $G$ has a tree $T$ on $t$ nodes where $t$ is the number of maximal cliques of $G$. 
\end{corollary}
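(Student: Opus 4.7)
The plan is to derive this corollary as an almost immediate consequence of Proposition \ref{prop:mincliquetree}, which has just been established. The proposition gives a structural characterization: every minimal representation $\rep$ of a chordal graph $G$ has $T$ being a clique tree of $G$ (and moreover fixes the subtree assigned to each vertex). Once we have this, the corollary reduces to invoking the definition of a clique tree.

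Concretely, my proof would proceed as follows. First, given any minimal representation $\rep$ of $G$, apply the ``only if'' direction of Proposition \ref{prop:mincliquetree} to conclude that $T$ is a clique tree of $G$. Second, recall from the definition of a clique tree given at the start of this section that the nodes of $T$ are in bijection with the maximal cliques of $G$. Therefore $\abs{V(T)} = t$, where $t$ is the number of maximal cliques of $G$, which is the desired conclusion.

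There is essentially no obstacle here; the work has already been done inside Proposition \ref{prop:mincliquetree}, and the corollary is really just a restatement of one of its consequences in terms of the cardinality of $V(T)$. The only substantive point worth emphasizing in the write-up is that the cardinality $t$ is a graph invariant of $G$ (it depends only on $G$ and not on the particular minimal representation chosen), so the statement says that all minimal representations of $G$ use trees of the same size. This independence is exactly what makes the corollary meaningful, and it is precisely what the bijection provided by Proposition \ref{prop:mincliquetree} guarantees.
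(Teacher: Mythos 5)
Your proposal is correct and matches the paper's own reasoning: the paper likewise derives Corollary \ref{cor:minimal} directly from Proposition \ref{prop:mincliquetree}, using the bijection between nodes of a clique tree and maximal cliques together with the fact that the number of maximal cliques is a graph invariant. Nothing further is needed.
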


\runningtitle{Separators and minimal separators}
A set $S$ of vertices of a connected graph $G$ is called a \textit{separator} if $G \setminus S$ is not connected where $G\setminus S$ denotes the graph induced by $V(G) \setminus S$. 
For two non adjacent vertices $i$ and $i'$ of $G$, the set $S$ is an $i$-$i'$ separator if $i$ and $i'$ are in different connected components of $G\setminus S$. An $i$-$i'$ separator is \textit{minimal} if none
of its proper subsets separates $i$ and $i'$. 
We say that $S$ is a \emph{minimal separator} of $G$ if there exists two non adjacent vertices $i$ and $i'$ in $G$ such that $S$ is a minimal $i$-$i'$ separator. 
A graph is \textit{$k$-connected} if it has more than $k$ vertices and every separator of it has at least $k$ vertices. 
The (vertex) \textit{connectivity} $\kappa(G)$ of $G$ is the largest number $k$ such that $G$ is $k$-connected. 
Given a representation $\rep$ of a chordal graph $G$ and an edge $e$ of $T$, we denote by $\TT_e$ the set of subtrees in $\TT$ that contain $e$, 
and by $V_e$ the set of vertices of $G$ represented by them.

It is well-known that the minimal separators of a chordal graph are complete subgraphs \cite{Golumbic:2004:AGT:984029}. The following result describes more precisely the minimal separators of a chordal graph by means of its clique trees.

\begin{theorem}\label{thm:minsep}
\cite{HoLee89}
Given a chordal graph $G$ and any clique tree $T$ of $G$, a set of vertices $S$ is a minimal separator of $G$ if and only if $S=V_e$ for some edge $e$ of $T$. 
\end{theorem}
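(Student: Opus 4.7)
The plan is to prove both implications separately, with the key structural observation as follows: for any edge $e = jj'$ of the clique tree $T$, removing $V_e$ from $G$ disconnects the vertex set along the two components of $T - e$. Indeed, any vertex $v \notin V_e$ has a subtree $T_v$ that does not contain $e$, so $V(T_v)$ lies entirely in one component of $T - e$; two such vertices on opposite components have disjoint subtrees, hence are non-adjacent in $G$. This single fact will drive both directions.

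For the ``if'' direction, fix an edge $e = jj'$ of $T$. By Proposition \ref{prop:mincliquetree}, $V_j$ and $V_{j'}$ are distinct maximal cliques, so by Observation \ref{obs:noInclusions} there exist vertices $i \in V_j \setminus V_{j'}$ and $i' \in V_{j'} \setminus V_j$. These are non-adjacent since $T_i$ and $T_{i'}$ lie in distinct components of $T - e$, and the structural observation above shows that $V_e$ is an $i$-$i'$ separator. For minimality, I observe that every $v \in V_e$ is adjacent both to $i$ (the subtrees $T_v$ and $T_i$ share $j$) and to $i'$ (they share $j'$), so $i$-$v$-$i'$ is a path in $G$ for every $v \in V_e$, meaning no proper subset of $V_e$ can still separate $i$ from $i'$.

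For the ``only if'' direction, let $S$ be a minimal $i$-$i'$ separator. Since $i$ and $i'$ are non-adjacent, the subtrees $T_i$ and $T_{i'}$ are disjoint, and $T$ contains a unique shortest path $j_0, j_1, \ldots, j_k$ with $j_0 \in V(T_i)$, $j_k \in V(T_{i'})$, and the intermediate nodes in neither subtree (by shortest-path minimality, $T_i$ cannot contain any $j_\ell$ with $\ell \geq 1$, and symmetrically for $T_{i'}$). Write $e_\ell = j_{\ell-1} j_\ell$; by the structural observation each $V_{e_\ell}$ is an $i$-$i'$ separator. I will show that $V_{e_\ell} \subseteq S$ for some $\ell$, after which minimality of $S$ forces $S = V_{e_\ell}$. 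Suppose instead that $V_{e_\ell} \setminus S$ is nonempty for every $\ell$, and pick $v_\ell \in V_{e_\ell} \setminus S$. Then the walk $i, v_1, v_2, \ldots, v_k, i'$ lies in $G \setminus S$: $i$ is adjacent to $v_1$ because $j_0 \in V(T_i) \cap V(T_{v_1})$, consecutive $v_\ell, v_{\ell+1}$ are adjacent because $j_\ell \in V(T_{v_\ell}) \cap V(T_{v_{\ell+1}})$, and $v_k$ is adjacent to $i'$ because $j_k \in V(T_{v_k}) \cap V(T_{i'})$ (degenerate coincidences like $v_1 = i$ or $v_k = i'$ only shorten the walk). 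This contradicts that $S$ separates $i$ from $i'$.

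The main obstacle is the ``only if'' direction, specifically setting up the bypass walk cleanly. The shortest-path choice of $j_0, \ldots, j_k$ is what guarantees that the interior nodes lie outside $T_i$ and $T_{i'}$, which in turn is what prevents degenerate coincidences of $v_\ell$ with $i$ or $i'$ in the middle of the walk. The remaining ingredients (the clique structure and separating property of $V_e$, the one-to-one correspondence between nodes of the clique tree and maximal cliques) follow directly from Proposition \ref{prop:mincliquetree} and the defining subtree-intersection property.
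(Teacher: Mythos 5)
The paper does not actually prove this statement; it quotes it from Ho and Lee \cite{HoLee89} and uses it as a black box, so there is no internal proof to compare against. Your argument is a correct, self-contained proof of the cited result, and it is essentially the standard one: the single observation that a vertex outside $V_e$ has its subtree confined to one component of $T-e$ gives both that $V_e$ separates (together with Observation \ref{obs:noInclusions}, via Proposition \ref{prop:mincliquetree}, to produce vertices on both sides) and that it is minimal (every $v\in V_e$ is a common neighbour of the two chosen vertices); and in the converse direction the bypass walk $i,v_1,\ldots,v_k,i'$ forces some $V_{e_\ell}\subseteq S$, whence $S=V_{e_\ell}$ by minimality. The only points worth tightening are cosmetic: the degenerate coincidences you mention cannot in fact occur at the endpoints, since $i\notin V_{e_\ell}$ and $i'\notin V_{e_\ell}$ for every $\ell$ (their subtrees miss one endpoint of each path edge), and the case $V_e=\emptyset$ is excluded for connected $G$, which is the setting in which the paper defines separators.
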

Proposition \ref{prop:mincliquetree} and Theorem \ref{thm:minsep} imply the following relation between minimal representations of a chordal graph and its minimal separators.
\begin{corollary}\label{cor:minsep}
Let $\rep$ be a minimal representation of a chordal graph $G$ where $\abs{V(T)}=t$.  Then
\begin{enumerate}[i)]
\item The set of minimal separators of $G$ is $\set{V_e | e \in E(T)}$.
\item The number of minimal separators of a chordal graph is at most $t-1$.
\item $\kappa(G) = \min \set{\abs{V_e} | e \in E(T)}$.
\end{enumerate}
\end{corollary}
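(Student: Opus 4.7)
The plan is to derive all three items directly from Proposition \ref{prop:mincliquetree} combined with Theorem \ref{thm:minsep}, together with the standard fact that the vertex connectivity of a non-complete connected graph equals the smallest cardinality of a minimal separator.

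By Proposition \ref{prop:mincliquetree}, $T$ is a clique tree of $G$, so applying Theorem \ref{thm:minsep} to $T$ yields part (i) immediately: a vertex set $S$ is a minimal separator of $G$ if and only if $S = V_e$ for some $e \in E(T)$. Part (ii) then follows by a trivial edge count: $T$ is a tree on $t$ nodes and hence has exactly $t-1$ edges, so by (i) the number of distinct minimal separators is at most $t-1$ (the map $e \mapsto V_e$ need not be injective, which is why the bound is an inequality rather than an equality).

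For part (iii), I plan to invoke Menger's theorem in its vertex form: for a connected non-complete graph $G$,
\[
\kappa(G) \;=\; \min_{i,\,i' \text{ non-adjacent}} \kappa(i,i'),
\]
and any minimum $i$-$i'$ vertex separator is automatically a minimal $i$-$i'$ separator (otherwise one could drop a vertex and still separate $i$ from $i'$, contradicting minimum cardinality). Consequently $\kappa(G)$ equals the smallest cardinality of a minimal separator of $G$, which by (i) equals $\min \set{\abs{V_e} : e \in E(T)}$, giving (iii).

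The derivation is essentially a chain of direct implications, so there is no deep obstacle. The only care required is for the degenerate case in which $G$ is complete: then $T$ has a single node, $E(T) = \emptyset$, $G$ has no minimal separators, and $\kappa(G) = n-1$ by convention; this case can be excluded from part (iii) or handled explicitly. A minor stylistic point is to make sure the Menger step is invoked cleanly without reproving it from scratch.
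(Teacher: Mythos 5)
Your derivation is correct and matches the paper's intent exactly: the paper states this corollary without an explicit proof, as an immediate consequence of Proposition \ref{prop:mincliquetree} (which makes $T$ a clique tree) and Theorem \ref{thm:minsep}, plus the standard identification of $\kappa(G)$ with the minimum cardinality of a minimal separator, which is precisely the chain you give. Your explicit handling of the complete-graph case ($E(T)=\emptyset$) is a reasonable extra precaution that the paper leaves implicit.
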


\begin{proposition}
\label{prop:NumberOfTreesInEdge}
Let $\rep$ be a minimal representation of a chordal graph $G$ on $n$ vertices where $\abs{V(T)}=t$. Then,
\begin{enumerate}[i)]
    \item For every edge $e$ of $T$, the number of subtrees in $\TT$ that contain $e$ is at most $n - t$,
    \item $\kappa(G) \leq n - t$.
\end{enumerate}
\end{proposition}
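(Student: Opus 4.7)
The plan is to prove (i) directly by a rooted-tree counting argument, and then derive (ii) essentially for free from Corollary \ref{cor:minsep}(iii). The central observation is that, under any rooting of $T$, each node must be the "topmost" node of at least one subtree in $\TT$, because of the constraint from Observation \ref{obs:noInclusions}.

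For (i), I would fix an edge $e=jj'$ of $T$ and let $T^1, T^2$ be the two connected components of $T \setminus e$, containing $j$ and $j'$ respectively, with $|V(T^1)|=t_1$, $|V(T^2)|=t_2$, and $t_1+t_2=t$. Root $T$ at $j$; then the parent $p(j^\ast)$ of every node $j^\ast \in V(T^2)$ lies on the path from $j^\ast$ towards $j$, and in particular $p(j')=j$. By Observation \ref{obs:noInclusions}, for each such $j^\ast$ there is a subtree $T_{i(j^\ast)} \in \TT_{j^\ast} \setminus \TT_{p(j^\ast)}$. The highest node of $T_{i(j^\ast)}$ under this rooting must be $j^\ast$ itself (it contains $j^\ast$ but not its parent, and a subtree is connected). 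Since each subtree has a unique highest node, the map $j^\ast \mapsto i(j^\ast)$ is injective, giving at least $t_2$ distinct subtrees whose highest node lies in $T^2$; every such subtree is contained entirely in the descendant set of its highest node and hence in $T^2$, so it does not contain $j$ and therefore does not contain $e$. Symmetrically, rerooting $T$ at $j'$ yields at least $t_1$ further subtrees contained entirely in $T^1$, disjoint from the previous ones. Thus at least $t_1+t_2=t$ subtrees in $\TT$ avoid $e$, so $|\TT_e| \leq n-t$.

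For (ii), I would split on whether $G$ is complete. If $t=1$, then $G$ is a complete graph on $n$ vertices and $\kappa(G)=n-1=n-t$, so the bound holds trivially. Otherwise $T$ has at least one edge, and Corollary \ref{cor:minsep}(iii) gives $\kappa(G) = \min\{|V_e| : e \in E(T)\}$; combined with part (i) (noting $|V_e|=|\TT_e|$), this yields $\kappa(G) \leq n-t$.

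The main subtlety I expect is the injectivity and disjointness step: one must verify that the highest node of a subtree is well-defined and unique under the rooting (which follows from connectedness of each $T_i$), and that the two collections of subtrees obtained by rooting at $j$ versus at $j'$ are disjoint (one is contained in $T^2$ and avoids $j$, the other in $T^1$ and avoids $j'$, so no subtree can lie in both). Once this bookkeeping is in place, the rest is essentially a direct application of Observation \ref{obs:noInclusions} and Corollary \ref{cor:minsep}.
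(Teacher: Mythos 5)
Your proof is correct, but part (i) takes a genuinely different route from the paper. The paper proves (i) by induction on $t$: it peels off a leaf $j''\notin\set{j,j'}$, observes that the resulting sub-representation is a minimal representation of the induced subgraph, applies the induction hypothesis to get $t-1$ subtrees avoiding $e$, and adds the one subtree guaranteed by Observation \ref{obs:noInclusions} to live only at $j''$. Your argument is instead a direct, non-inductive count: rooting $T$ at $j$ and assigning to each node $j^\ast$ of the component $T^2$ a subtree whose topmost node is $j^\ast$ (which exists by Observation \ref{obs:noInclusions} applied to the edge $j^\ast p(j^\ast)$, and is forced to have topmost node $j^\ast$ by connectedness), you get $t_2$ subtrees confined to $T^2$, and symmetrically $t_1$ confined to $T^1$; injectivity via uniqueness of the topmost node and disjointness of the two families give $t$ subtrees avoiding $e$ in one shot. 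Your approach buys a self-contained argument that avoids the (slightly glossed-over) verification in the paper that leaf deletion preserves minimality of the representation, and it makes explicit \emph{where} the $t$ avoiding subtrees live (each one is "anchored" at a distinct node of $T$), which is a mildly stronger structural statement. The paper's induction is shorter on the page but leans on the sub-representation machinery. For part (ii) both arguments are the same appeal to Corollary \ref{cor:minsep}(iii) together with $\abs{V_e}=\abs{\TT_e}$; your explicit handling of the $t=1$ case (where $E(T)=\emptyset$ and the corollary's minimum is over an empty set) is a small point of extra care that the paper omits.
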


\begin{proof}
\begin{enumerate}[i)]
\item Let $\rep$ be a minimal representation of a chordal graph $G$ on $n$ vertices and $e=jj'$ be an edge of $T$.
We prove by induction on $t$.
If $t=2$ then since $\rep$ is contraction-minimal, 
there is at least one subtree in $\TT_j$ that does not contain $jj'$ 
and at least one subtree in $\TT_{j'}$ that does not contain $jj'$, and we are done.

If $t>2$ at least one of $j,j'$ is not a leaf of $T$.
Therefore, $T$ contains at least one leaf $j'' \notin \set{j,j'}$.
Let $T'=T-j''$ and $\repprime$ be the sub-representation of $\rep$ where $\TT'$ consists of the non-empty intersections of the subtrees of $\TT$ with $T'$. Then $\repprime$ is a minimal representation of $G[V']$ where $V'$ is the set of vertices of $G$ represented by the subtrees in $\TT'$. By the induction hypothesis, 
$\TT'$ contains at least $\abs{V(T')}=t-1$ subtrees not containing $jj'$.
Furthermore, $\TT$ contains at least one subtree $T_i$ that contains $j''$ but not its neighbour in $T$.
Thus $T_i\in \TT \setminus \TT'$ and it does not contain $jj'$.
Therefore, $\TT$ contains at least $\abs{V(T')}+1=t$ subtrees that do not contain $jj'$.
We conclude that $\TT$ contains at most $n - t$ subtrees that contain $e$.

\item We have $\abs{V_e} = \abs{\TT_e} \leq n-t$. The result follows from Corollary \ref{cor:minsep}.
\end{enumerate}

\end{proof}

\begin{corollary}\label{cor:AlgModification}
With the following modifications, $\alg$ generates $k$-connected chordal graphs in linear time.
\begin{enumerate}[i)]
    \item At line \ref{lin:chooseProper} where a proper subset to contain the edge $jj'$ is chosen, pick a proper subset $\TT'$ of $\TT_j$ of cardinality at least $k$.
    \item At the first invocation of {\func}, create a clique of at least $k+1$ vertices.
\end{enumerate}
\end{corollary}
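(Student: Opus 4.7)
The plan is to verify three things: (a) the modified algorithm still produces a contraction-minimal representation, so that $T$ is a clique tree of the output graph $G$ and Corollary \ref{cor:minsep}(iii) applies; (b) every edge $e$ of $T$ satisfies $\abs{V_e}\geq k$, so that $\kappa(G)\geq k$; and (c) the linear time bound from Theorem \ref{thm:linear} is preserved. Point (a) follows from exactly the argument given in the proof of Theorem \ref{thm:linear}: modification (i) still picks a proper subset of $\TT_j$, and modification (ii) creates a non-empty set of new subtrees at the first node, so Observation \ref{obs:noInclusions} applies at every edge as before.

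For (b), we would first establish by induction on the iteration number the invariant that at the end of every iteration, every node $j$ currently in $T$ satisfies $\abs{\TT_j}\geq k+1$. The base case is immediate from modification (ii). For the inductive step, when a new leaf $j'$ is attached to an existing node $j$, the inductive hypothesis gives $\abs{\TT_j}\geq k+1$, which is precisely what is required for modification (i) to be feasible, namely to allow a proper subset $\TT'\subsetneq \TT_j$ of size at least $k$; the counts at previously existing nodes are untouched since subtrees are only ever extended to newly created nodes; and for $j'$ itself, $\abs{\TT_{j'}}$ equals $\abs{\TT'}$ plus the number $k_0\geq 1$ of fresh subtrees produced by {\func}, so $\abs{\TT_{j'}}\geq k+1$. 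Next, we would argue that for every edge $e=jj'$ of the final tree, with $j'$ added at some iteration $h$ adjacent to $j$, the set of subtrees containing $e$ equals the subset $\TT'$ chosen at iteration $h$: the fresh subtrees created at $j'$ do not contain $j$, and no subtree born at a later iteration can contain $j$ or $j'$ since subtrees only acquire newer nodes via subsequent extensions. Hence $\abs{V_e}=\abs{\TT'}\geq k$ by modification (i).

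Combining (a) and (b), Corollary \ref{cor:minsep}(iii) yields $\kappa(G)=\min_{e\in E(T)}\abs{V_e}\geq k$. Since modification (ii) already guarantees $n\geq k+1>k$, the graph has more than $k$ vertices and is therefore genuinely $k$-connected. For (c), both modifications only further constrain existing random choices and add at most constant work per iteration; the $\Theta(k^2)$ cost of building the initial clique of size at least $k+1$ is amortized against the $\Theta(k^2)$ edges it contributes to $G$, so the overall complexity remains $O(n+m)$. The principal obstacle is maintaining the $\abs{\TT_j}\geq k+1$ invariant through the interplay of modifications (i) and (ii), since this is what simultaneously makes modification (i) feasible at every iteration and delivers the edge-count bound $\abs{V_e}\geq k$.
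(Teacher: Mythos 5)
Your proposal is correct and follows essentially the same route as the paper: both reduce $k$-connectivity to the requirement that the subset chosen at Line \ref{lin:chooseProper} have cardinality at least $k$ (via Corollary \ref{cor:minsep} and the fact that the modified algorithm still outputs a contraction-minimal representation), and both establish feasibility of that choice through the invariant $\abs{\TT_j}\geq k+1$ at every node, seeded by the initial clique of size $k+1$. The paper phrases this invariant as a contradiction argument on the first node where the choice would fail, while you run a direct induction and spell out a few details the paper leaves implicit (the identification of $V_e$ with the chosen subset $\TT'$ and the check that $n>k$), but the substance is the same.
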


\begin{proof}
By Corollary \ref{cor:minsep}, the choice of a proper subset of cardinality at least $k$, i.e., 
the successful execution of (the modified)  Line \ref{lin:chooseProper} at every node is necessary and sufficient for $k$-connectivity.
Suppose that the execution of Line \ref{lin:chooseProper}  fails at some node of $T$, and let $j$ be the first node that this happens.
If $j$ is not the first node of $T$, let $j''$ be the node chosen by the algorithm when $j$ is created.
Since Line \ref{lin:chooseProper} did not fail at $j''$, at least $k$ subtrees from $\TT_{j''}$ contain $j''j$.
Moreover, at least one more subtree is created by {\func} when $j$ is created.
Therefore, $\TT_j$ contains at least $k+1$ subtrees, and Line \ref{lin:chooseProper} does not fail at $j$, a contradiction.
We conclude that $j$ must be the first node of $T$.
Therefore, the second modification is sufficient to guarantee that the chordal graph generated by the algorithm is $k$-connected. Clearly, these modifications do not change the time complexity of $\alg$.
\end{proof}

Noting that in a minimal representation $\rep$ of a chordal graph $G$, a subtree of $\TT$ containing an edge of $T$ yields an edge in $G$, Proposition \ref{prop:NumberOfTreesInEdge} implies the following simple characterization of minimal representations of an independent set. Indeed, this can be viewed as a reformulation of the following well known fact in terms of minimal representations: A chordal graph $G$ on $n$ vertices has at most $n$ maximal cliques, with equality if and only if $G$ is an independent set \cite{Golumbic:2004:AGT:984029}.

\begin{observation}\label{obs:IndependentSet}
Let $\rep$ be a minimal representation of a chordal graph $G$ on $n$ vertices where $\abs{V(T)}=t$. Then, $G$ is an independent set if and only if $t=n$.
\end{observation}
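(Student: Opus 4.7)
The plan is to handle the two directions separately. For the backward implication (if $G$ is an independent set then $t=n$), I would just observe that an independent set on $n$ vertices has exactly $n$ maximal cliques, one per vertex; Corollary \ref{cor:minimal} then forces $t=n$ in any minimal representation, so this direction reduces to a one-line appeal.

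For the forward implication (if $t=n$ then $G$ is independent), my plan is to feed $t=n$ into Proposition \ref{prop:NumberOfTreesInEdge}(i), which gives that every edge of $T$ is contained in at most $n-t=0$ subtrees. Hence each $T_i$ reduces to a single node of $T$. This is the key structural consequence I want to exploit and is really the engine of the argument.

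To close, I would count: since every subtree is a single node, $n=\sum_i \abs{V(T_i)}=\sum_{j=1}^{t} t_j$. Observation \ref{obs:noInclusions} gives $t_j\geq 1$ at every node incident to an edge, hence at every node whenever $t\geq 2$ (the degenerate case $t=n=1$ is immediate, as $G$ is then a single vertex). Combining $\sum_j t_j=n$ with the presence of $t=n$ summands each at least $1$ forces $t_j=1$ for all $j$. Thus no two subtrees share a node, no two subtrees intersect, and $G$ is edgeless.

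I do not anticipate any serious obstacle; the only subtlety is remembering to combine the zero-edge-containment consequence of Proposition \ref{prop:NumberOfTreesInEdge} with the minimality-driven lower bound $t_j\geq 1$ coming from Observation \ref{obs:noInclusions}. A shorter alternative route would be to cite Corollary \ref{cor:minimal} together with the classical bound stating that a chordal graph on $n$ vertices has at most $n$ maximal cliques with equality iff $G$ is independent, but since the paper frames the observation as a consequence of Proposition \ref{prop:NumberOfTreesInEdge}, I would present the direct argument above.
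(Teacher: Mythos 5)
Your proposal is correct and follows exactly the route the paper indicates: the paper states this observation without a written proof, presenting it as an immediate consequence of Proposition \ref{prop:NumberOfTreesInEdge} (together with the remark that a subtree containing an edge of $T$ forces an edge of $G$) and as a reformulation of the classical fact that a chordal graph on $n$ vertices has at most $n$ maximal cliques with equality iff it is edgeless. Your argument is a correct and complete elaboration of that same route, using Corollary \ref{cor:minimal} for the backward direction and Proposition \ref{prop:NumberOfTreesInEdge}(i) combined with the $t_j\geq 1$ consequence of Observation \ref{obs:noInclusions} for the forward direction.
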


Proposition \ref{prop:NumberOfTreesInEdge} also gives some insight about minimal representations of special chordal graphs such as trees and forests as described in Propositions \ref{prop:tree} and \ref{prop:forest} respectively.

\begin{proposition}\label{prop:tree}
Let $\rep$ is a minimal representation of a connected chordal graph $G$ on $n \geq 2$ vertices where $\abs{V(T)}=t$. Then, $G$ is a tree if and only if $t=n-1$.
\end{proposition}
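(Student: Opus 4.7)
The plan is to combine Corollary \ref{cor:minimal} (which identifies $t$ with the number of maximal cliques of $G$), Proposition \ref{prop:NumberOfTreesInEdge} (which bounds $\abs{V_e}$ for every edge $e$ of $T$), and a short double-counting identity relating the sizes of the maximal cliques and minimal separators to $n$.

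For the easy direction, if $G$ is a tree on $n \geq 2$ vertices, then $G$ is triangle-free, so the maximal cliques of $G$ are exactly its edges, of which there are $n-1$. Corollary \ref{cor:minimal} then forces $t = n-1$.

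For the nontrivial direction, assume $t = n-1$. Proposition \ref{prop:NumberOfTreesInEdge}(i) gives $\abs{V_e} \leq n - t = 1$ for every edge $e$ of $T$. The key step is the identity
\[
\sum_{j \in V(T)} \abs{V_j} \;-\; \sum_{e \in E(T)} \abs{V_e} \;=\; n,
\]
which follows by summing $\abs{V(T_i)} - \abs{E(T_i)} = 1$ over all $i \in [n]$ (each $T_i$ is a tree) and interchanging the order of summation on each side. Since $\abs{E(T)} = t-1$ and each $\abs{V_e} \leq 1$, this yields $\sum_j \abs{V_j} \leq n + (t-1) = 2(n-1) = 2t$. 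On the other hand, $G$ is connected with $n \geq 2$, so $G$ has no isolated vertex and every maximal clique has at least $2$ vertices, which gives $\sum_j \abs{V_j} \geq 2t$. Equality must hold throughout, and in particular $\abs{V_j} = 2$ for every $j$, so every maximal clique of $G$ has exactly two vertices, i.e., $G$ is triangle-free. A triangle-free chordal graph is a forest (any shortest cycle of length $\geq 4$ would admit a chord producing a strictly shorter cycle), and since $G$ is connected it must be a tree.

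I do not expect a serious obstacle: the double-counting identity is the only step that requires a moment's thought, and it is immediate from the observation that each $T_i$ has exactly one more node than edge. Everything else is a direct application of results already established earlier in the paper.
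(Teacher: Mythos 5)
Your proof is correct, but the non-trivial direction ($t=n-1 \Rightarrow G$ is a tree) takes a genuinely different route from the paper's. The paper argues dynamically: it invokes Theorem \ref{thm:linear} to view $\rep$ as the output of an execution of {\alg}, partitions $\TT$ according to which invocation of {\func} created each subtree, uses Corollary \ref{cor:AlgModification} to pin down where the single ``extra'' subtree is created, and then shows by induction along the order of node creation that $\abs{\TT_j}=2$ for every $j$, contradicting the existence of a triangle. You instead give a static double-counting argument: the Euler-type identity $\sum_{j\in V(T)}\abs{V_j}-\sum_{e\in E(T)}\abs{V_e}=n$ (from $\abs{V(T_i)}-\abs{E(T_i)}=1$ for each subtree), combined with the bound $\abs{V_e}\le n-t=1$ from Proposition \ref{prop:NumberOfTreesInEdge}(i) and the lower bound $\abs{V_j}\ge 2$ from connectivity via Proposition \ref{prop:mincliquetree}, forces $\abs{V_j}=2$ for all $j$, hence triangle-freeness, hence (by chordality) a forest, hence a tree. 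Your argument is more self-contained and elementary --- it never appeals to the generation algorithm or to the probabilistic statement that every minimal representation is reachable by it --- whereas the paper's proof reuses the algorithmic machinery it has already built; your identity also cleanly isolates exactly where the hypothesis $t=n-1$ enters. The easy direction is the same in both proofs. All the steps you use (in particular that each $V_j$ is a maximal clique in a minimal representation, and that a maximal clique of a connected graph on at least two vertices has at least two vertices) are justified by results established earlier in the paper, so there is no gap.
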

\begin{proof}
Assume that $t=n-1$. 
Since $G$ is connected, it remains to show that $G$ does not contain cycles.
Suppose that $G$ contains a cycle. 
Since $G$ is chordal, it must contain a triangle.
Therefore, the clique number of $G$ is at least 3.

Consider an execution of {\alg} that generates the representation $\rep$ and the graph $G=G(\TT)$.
Let $\TT^{(j)}$ be the set of subtrees of $\TT$ generated by the $j$-th invocation of {\func} (that created the node $j$ of $T$).
Then $\set{\TT^{(1)}, \ldots, \TT^{(n-1)}}$ is a partition of $\TT$.
Therefore, $\abs{\TT^{(j)}}=2$ for some $j \in [n-1]$ and $\abs{\TT^{(j')}}=1$, for every $j' \neq j$.
Corollary \ref{cor:AlgModification} implies that $j=1$.
By the behaviour of {\alg}, it is easy to show by induction on $j$, 
that, $\abs{\TT_j}=2$ for every $j \in [n-1]$.
That is, every maximal clique of $G$ has exactly two vertices, i.e. the clique number of $G$ is at most 2, a contradiction.

Conversely, assume that $G$ is a tree on $n$ vertices.
The set of maximal cliques of $G$ is the set of its $n-1$ edges. By Proposition \ref{prop:mincliquetree}, every minimal representation of $G$ has a tree $T$ on $t=n-1$ nodes.
\end{proof}

We note that a minimal representation of a tree is not unique. For example, let $G$ be a star on $n$ vertices. 
For every tree $T$ on $n-1$ vertices, $G$ has a representation $\rep$.
Indeed, $\TT$ consists of $T$ itself and the $n-1$ trivial subtrees of $T$.

If we relax the condition that $G$ is connected in Proposition \ref{prop:tree}, then we obtain the following result.

\begin{proposition}\label{prop:forest}
\begin{enumerate}[i)]
\item If $\rep$ is a minimal representation of a chordal graph $G$ on $n$ vertices, and $\abs{V(T)}=t=n-1$ then $G$ is a forest, where all the trees of $G$ except one are trivial.
\item Every forest on $n \geq 2$ vertices with exactly one non-trivial tree has a minimal representation $\rep$ with $\abs{V(T)}=t=n-1$.
\end{enumerate}
\end{proposition}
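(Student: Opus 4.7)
The plan is to handle each part by translating $t$ into the number of maximal cliques via Corollary \ref{cor:minimal}, and then arguing component by component using Proposition \ref{prop:tree}.

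For part (i), let $G_1,\ldots,G_p$ be the connected components of $G$, with $n_i = \abs{V(G_i)}$ and $t_i$ the number of maximal cliques of $G_i$, so $\sum_i n_i = n$ and $\sum_i t_i = t = n-1$. Trivial components contribute $t_i = n_i = 1$. For a connected component with $n_i \geq 2$, Observation \ref{obs:IndependentSet} gives $t_i \leq n_i - 1$ (otherwise $G_i$ would be an independent set, contradicting connectedness on $n_i \geq 2$ vertices), with equality if and only if $G_i$ is a tree by Proposition \ref{prop:tree}. Summing yields $n - 1 = \sum_i t_i \leq n - q$, where $q$ is the number of non-trivial components; hence $q \leq 1$. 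Moreover $q \geq 1$, since $q = 0$ would make $G$ an independent set and thus give $t = n$, contradicting $t = n - 1$. So exactly one non-trivial component $G_j$ exists, and the bound above must be tight at $G_j$, giving $t_j = n_j - 1$. Proposition \ref{prop:tree} then forces $G_j$ to be a tree, so $G$ is a forest with exactly one non-trivial tree.

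For part (ii), given a forest $G$ on $n \geq 2$ vertices with a unique non-trivial tree $G'$ on $n' \geq 2$ vertices, I count the maximal cliques of $G$: the edges of $G'$ contribute $n' - 1$ of them, and each of the $n - n'$ isolated vertices contributes one, totalling $n - 1$. A clique tree $T$ of $G$ exists: take any clique tree of each component (which exists because every chordal graph admits one) and glue them into a single tree by adding arbitrary connecting edges between them; since each vertex of $G$ belongs to exactly one component, the subtree property is preserved under the gluing. Then $\abs{V(T)} = n - 1$, and by Proposition \ref{prop:mincliquetree} the associated family of subtrees is a minimal representation $\rep$ of $G$.

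I expect the only mild subtlety to be the existence of a clique tree for disconnected $G$ in (ii), handled by the gluing argument above; everything else is a direct application of the machinery already developed, in particular the tight accounting of maximal cliques per component enabled by Observation \ref{obs:IndependentSet} and Proposition \ref{prop:tree}.
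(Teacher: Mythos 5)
Your proof is correct, but part i) follows a genuinely different route from the paper's. The paper works inside the host tree: it invokes Proposition \ref{prop:NumberOfTreesInEdge} with $n-t=1$ to conclude that every edge of $T$ lies in at most one subtree, cuts $T$ along the edges $e$ with $\TT_e=\emptyset$, and then applies Proposition \ref{prop:tree} and Observation \ref{obs:IndependentSet} to the resulting pieces. You instead decompose $G$ into its connected components and do the accounting purely in terms of maximal cliques, using Corollary \ref{cor:minimal} to identify $t$ with the total number of maximal cliques and the bound $t_i\leq n_i-1$ for connected non-trivial components. Your version buys two things: it never needs Proposition \ref{prop:NumberOfTreesInEdge}, and it applies Proposition \ref{prop:tree} only to genuine connected components, whereas the paper applies it to the subgraphs represented by the pieces $T^{(j)}$ and leaves their connectedness implicit. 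The cost is a heavier reliance on the clique-tree correspondence (Corollary \ref{cor:minimal}), including for disconnected graphs, though that correspondence is indeed stated and proved without any connectivity assumption. For part ii) your gluing of clique trees of the components is essentially the paper's construction (attach $n-n'$ fresh nodes, each carrying one trivial subtree, to a representation of the non-trivial tree) rephrased in clique-tree language; in fact, once you have computed that $G$ has exactly $n-1$ maximal cliques, you could shortcut the construction entirely: any minimal representation of $G$ exists by Gavril's theorem plus contractions, and Corollary \ref{cor:minimal} already forces it to live on a tree with $t=n-1$ nodes.
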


\begin{proof}
\begin{enumerate}[i)]
\item 
By Proposition \ref{prop:NumberOfTreesInEdge}, every edge of $T$ is contained in at most one subtree in $\TT$.
Let $T^{(1)}, \ldots, T^{(\ell)}$ be the subtrees of $T$ obtained by the removal of the edges $e$ such that $\TT_e = \emptyset$,
and $n_i$ be the number of subtrees in $\TT$ that are contained in $T^{(i)}$ for every $i \in [\ell]$.
Clearly, $\sum_{i=1}^\ell \abs{V(T^{(i)})} = t = n-1$, $\sum_{i=1}^\ell n_i = n$, 
and $n_i \geq \abs{V(T^{(i)})}$ for every $i \in \ell$.
Therefore, there exists some $j \in [\ell]$ such that $n_j = \abs{V(T^{(j)})}+1$,
and $n_{j'} = \abs{V(T^{(j')})}$ for every $j' \in [\ell]$ different than $j$. 
By Proposition \ref{prop:tree}, the subtrees in $T^{(j)}$ represent a tree.
By Observation \ref{obs:IndependentSet}, the subtrees in $T^{(j')}$ represent an independent set for every $j' \in [\ell]$ different than $j$.

\item
Let $G$ be a forest on $n$ vertices, with exactly one non-trivial tree $G'$ on $n'$ vertices.
Let $\repprime$ be a representation of $G'$ with $\abs{T'}=n'-1$.
By adding $n - n'$ new nodes to $T'$, 
and adding to $\TT'$ a trivial tree (consisting of that node) for every new node,
we get a representation of $G$ on a tree with $n-1$ nodes.
\end{enumerate}
\end{proof}

\section{Arbitrary Representations}\label{sec:general}
We start this section by showing that the upper bound of Theorem \ref{thm:MinimalUpperBound} does not hold for arbitrary representations on trees with $n$ nodes. 
We denote by $L(T)$ the set of leaves of a tree $T$.

\begin{lemma}\label{lem:lowerbound}
Let $\TT' = \set{T'_1, \ldots, T'_n}$ be a set of subtrees on a tree with $n$ nodes and $m$ be the number of edges of $G(\TT')$. Then
\[
\sum_{i=1}^n \abs{L(T'_i)} \textrm{~is~} \Omega (m \sqrt{n}).
\]
\end{lemma}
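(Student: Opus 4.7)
The plan is to prove this lower bound by a double-counting argument relating the leaves of the subtrees to the edges of $G(\TT')$. For each node $j$ of the ambient tree, let $t_j$ denote the number of subtrees in $\TT'$ containing $j$, and let $\ell_j$ denote the number of subtrees in $\TT'$ for which $j$ is a leaf. A standard swap of summation gives $\sum_{i=1}^n |L(T'_i)| = \sum_j \ell_j$ and $\sum_{i=1}^n |V(T'_i)| = \sum_j t_j$, so the task reduces to lower-bounding $\sum_j \ell_j$ in terms of $m$ and $n$.

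First I will establish the edge-counting inequality $m \leq \sum_j \binom{t_j}{2}$, which follows from the Helly property for subtrees of a tree: each edge of $G(\TT')$ is realized by a pair of subtrees sharing at least one node. This immediately gives $\sum_j t_j^2 \geq 2m$. Combined with the elementary bound $\sum_j t_j^2 \leq (\max_j t_j)\sum_j t_j$, this yields the baseline inequality $\sum_j t_j \geq 2m/\max_j t_j$, which will serve as a first estimate on the total subtree size.

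Second I will relate $\sum_j \ell_j$ to $\sum_j t_j$ via the branching structure at each node. For each subtree $T'_i$ containing $j$, the degree of $j$ in $T'_i$ equals $1$ when $j$ is a leaf of $T'_i$ and is at least $2$ otherwise. A double count yields the identity $\sum_i [j \in V(T'_i)]\deg_{T'_i}(j) = \sum_{j' \sim j} y_{jj'}$, where $y_{jj'}$ is the number of subtrees containing the edge $jj'$. Rearranging gives the pointwise bound $\ell_j \geq 2t_j - \sum_{j' \sim j} y_{jj'}$, and summing over $j$ together with $\sum_e y_e = \sum_i (|V(T'_i)| - 1)$ produces a first inequality linking $\sum_j \ell_j$, $\sum_j t_j$, and $n$.

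The main obstacle will be combining these ingredients to recover the full multiplicative factor $\sqrt n$. A naive summation of the pointwise bound alone telescopes down to the very weak conclusion $\sum_j \ell_j \geq 2n$, and the baseline bound $\sum_j t_j \geq 2m/\max_j t_j$ is only useful when no single node of the ambient tree dominates. I expect the crucial step to be a dichotomy on the shapes of the subtrees: either a significant fraction of them have substantial branching (in which case their leaves contribute to $\sum_j \ell_j$ directly), or most are close to paths, in which case a refined edge-by-edge Cauchy--Schwarz that exploits the $n$-node budget of the ambient tree must be used to tie the edge count $m$ tightly to the total length of these paths. Making this dichotomy quantitatively precise so that the factor $\sqrt n$ is recovered in both regimes is where the real work lies.
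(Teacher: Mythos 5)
Your plan attacks a statement the paper never proves, and indeed one that is false: you read the lemma as a universal lower bound valid for \emph{every} set of $n$ subtrees on an $n$-node tree, whereas its intended (and only tenable) meaning is a worst-case, existential bound. The role of the lemma, made explicit by the corollary that follows it, is to exhibit representations whose total leaf count is as large as $\Omega(m\sqrt{n})$, so that \emph{any} generator capable of outputting every representation must spend $\Omega(m\sqrt{n})$ time in the worst case. Accordingly, the paper's proof is a bare construction: for $n=6\cdot 3^{2k}$ it builds a tree consisting of a path on $4\cdot 3^{2k}$ nodes together with a star on $2\cdot 3^{2k}$ nodes, populates it with trivial paths and $3^{k+1}$ copies of the star, and checks directly that $\sum_{i=1}^n \abs{L(T'_i)} \geq 6\cdot 3^{3k}$ while $m=\Theta(3^{2k})$, so the ratio is $\Omega(3^k)=\Omega(\sqrt{n})$. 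The universal statement you set out to prove fails badly: let $T'$ be a path on $n$ nodes and let every $T'_i$ be all of $T'$. Then $G(\TT')=K_n$, so $m=\binom{n}{2}$ and $m\sqrt{n}=\Theta(n^{5/2})$, yet each subtree has exactly two leaves and $\sum_{i=1}^n \abs{L(T'_i)}=2n$. The paper's own Theorem \ref{thm:MinimalUpperBound} signals the same obstruction: contraction-minimal representations have total size (hence total leaf count) at most $2m+n$, which for dense graphs is far below $m\sqrt{n}$.

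Because the target inequality is false as you quantified it, no amount of refinement of your machinery can close the argument; the individual ingredients (the Helly-based count $m\le\sum_j\binom{t_j}{2}$, the degree identity relating leaves to the edge multiplicities $y_{jj'}$) are sound, but they are aimed at an unprovable conclusion, and your own diagnosis confirms this --- the path-like regime of your proposed dichotomy is exactly the regime of my counterexample, where no Cauchy--Schwarz refinement can manufacture a $\sqrt{n}$ factor that is not there. The repair is not technical but conceptual: replace ``for all $\TT'$'' by ``there exists $\TT'$'' and prove the lemma by an explicit extremal family, as the paper does; intuitively one wants many subtrees that each have many leaves (copies of a large star) while the intersection graph stays sparse enough, with the path portion of the tree supplying the remaining vertices as (near-)trivial subtrees so that $m$ does not blow up. Balancing the star multiplicity at $\Theta(\sqrt{n})$ is precisely what yields the $\Omega(m\sqrt{n})$ ratio, and this matches the universal upper bound $O(m\sqrt{n})$ of Theorem \ref{lem:UpperBound}, which is what makes the pair of bounds tight.
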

\begin{proof}
Let $k$ be a non-negative integer, and $n=6 \cdot 3^{2k}$.
Let $T'$ be a tree on $n$ nodes $\{v_1, \ldots, v_{4 \cdot 3^{2k}}, u_1, \ldots, u_{2 \cdot 3^{2k}}\}$, 
where the nodes $\{v_1, \ldots, v_{4 \cdot 3^{2k}}\}$ induce a path and the nodes $\{u_1, \ldots, u_{2 \cdot 3^{2k}}\}$ induce a star with center $u_1$.
The representation contains the following subtrees:
\begin{itemize}
\item {$S_1$:} 2 trivial paths on every node in $\{v_1, \ldots, v_{2 \cdot 3^{2k}}\}$, for a total of $4 \cdot 3^{2k}$ paths,
\item {$S_2$:} $3^{k+1}$ copies of the star on nodes $\{u_1, \ldots, u_{2 \cdot 3^{2k}}\}$, and
\item {$S_3$:} $2 \cdot 3^{2k} - 3^{k+1}$ disjoint trivial paths on part of the nodes in $\{v_{2 \cdot 3^{2k} + 1}, \ldots, v_{4 \cdot 3^{2k}}\}$. 
Note that the number of these paths is less than the number $2 \cdot 3^{2k}$ of nodes in the path, thus disjointness can be achieved. 
\end{itemize}
The number of subtrees is $4 \cdot 3^{2k} +  3^{k+1} + 2 \cdot 3^{2k} - 3^{k+1} = 6 \cdot 3^{2k} = n$, as required. 
As for the total number of leaves, we have:
\[
 \sum_{i=1}^n \abs{L(T'_i)} \geq 3^{k+1} \cdot (2 \cdot 3^{2k}-1) + 4 \cdot 3^{2k} + 2 \cdot 3^{2k} - 3^{k+1} \geq 6 \cdot 3^{3k}.
\]
Let $G$ be the intersection graph of these subtrees. $G$ consists of a $K_{3^{k+1}}$, $2 \cdot 3^{2k}$ disjoint $K_2$s and isolated vertices.
We have $m =  2 \cdot 3^{2k} + 3^{k+1} \frac{3^{k+1} - 1}{2}$, thus
\[
\frac{\sum_{i=1}^n \abs{L(T'_i)}}{m} \geq \frac{6 \cdot 3^{3k}}{\frac{13}{2} \cdot 3^{2k} - \frac{3}{2}3^k} = \Omega(3^k) = \Omega(\sqrt{n}).
\]
\end{proof}

Since the space needed to represent a subtree is at least the number of its leaves, Lemma \ref{lem:lowerbound} implies the following:

\begin{corollary}
The time complexity of any algorithm that generates chordal graphs by picking an arbitrary subtree representation on a tree with $n$ nodes is $\Omega(m \sqrt{n})$.
\end{corollary}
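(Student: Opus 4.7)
The plan is to derive this corollary directly from Lemma \ref{lem:lowerbound}, using the elementary fact that an algorithm's running time is at least the size of its output. Lemma \ref{lem:lowerbound} already exhibits, for arbitrarily large $n$ (namely $n = 6 \cdot 3^{2k}$), a subtree intersection representation on a tree with $n$ nodes in which $\sum_{i=1}^n |L(T'_i)| = \Omega(m\sqrt{n})$, so all that remains is to argue that any generator that picks such a representation must spend time at least this sum.

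First I would formalize the remark "the space needed to represent a subtree is at least the number of its leaves" that precedes the corollary. In any reasonable encoding of a subtree $T'_i$ of $T$ — for instance as an explicit list of its nodes, as an indicator bitmask of length $|V(T)|$, or as an adjacency/edge-list data structure — the leaves of $T'_i$ must be individually identifiable, so the number of machine words (or bits) devoted to $T'_i$ is $\Omega(|L(T'_i)|)$. Summing over $i$, the total size of the data structures constituting $\TT'$ is $\Omega\bigl(\sum_{i=1}^n |L(T'_i)|\bigr)$, which by Lemma \ref{lem:lowerbound} is $\Omega(m\sqrt{n})$ on the family of instances constructed there.

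Second, since no algorithm can terminate before it has written out its subtree representation, its running time is bounded below by this total output size; evaluating on the worst-case family then gives the $\Omega(m\sqrt{n})$ lower bound on the time complexity of any such generator. The only minor subtlety — and really the only obstacle worth mentioning — is to interpret \emph{``generates chordal graphs by picking an arbitrary subtree representation''} as meaning that the algorithm actually materializes the representation $\rep$ as an intermediate object, matching the three-step framework (tree generation, subtree generation, intersection graph computation) described in the introduction. Under this reading the output-size lower bound applies verbatim and the corollary follows from Lemma \ref{lem:lowerbound}.
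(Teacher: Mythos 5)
Your proposal is correct and follows exactly the paper's own route: the paper derives this corollary from Lemma \ref{lem:lowerbound} via the single observation that the space (hence time) needed to represent a subtree is at least its number of leaves. Your elaboration of the encoding argument and the interpretation of ``picking an arbitrary subtree representation'' as materializing $\left<T',\TT'\right>$ only makes explicit what the paper leaves implicit.
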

We now proceed to show that this bound is tight up to a constant factor.

Through the rest of this section $T'$ is a tree on $n$ nodes and $\TT'=\set{T'_1, \ldots, T'_n}$ is a set of subtrees of $T'$ such that $G(\TT')=G$. We also denote by $t'_j \defined \abs{\TT'_j}$ the number of subtrees in $\TT'$ that contain the node $j$ of $T'$. We assume that $\rep$ is a contraction-minimal representation of $G$ obtained from $\repprime$ by zero or more successive contraction operations. Then, $T$ has $t\leq n$ nodes and the \emph{multiplicity} of $j$ (with respect to $T'$), denoted by $k_j$, is the number of contractions effectuated in $T'$ in order to obtain node $j$ in $T$, plus one.

\begin{lemma}
With the above notations, we have 
\begin{eqnarray}
\forall j \in [t]~~k_j \geq 1 \label{eqn:kPositive}\\
\sum_{j=1}^{t} k_j = n.\label{eqn:kSumsToN}\\
\sum_{i=1}^n \abs{V(T'_i)} \leq \sum_{j=1}^{t} k_j t_j. \label{eqn:upperb}
\end{eqnarray}
\end{lemma}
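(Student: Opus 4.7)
The plan is to treat the sequence of contractions from $\repprime$ to $\rep$ as defining a partition of $V(T')$. Specifically, I would show by induction on the number of contractions that each node $j$ of $T$ corresponds to a uniquely determined subset $S_j \subseteq V(T')$ which is connected in $T'$, and moreover the family $\{S_j\}_{j \in [t]}$ partitions $V(T')$. When an edge $e=j_1 j_2$ of some intermediate tree is contracted into a new node $j$, the sets attached to $j_1$ and $j_2$ simply merge into $S_j$; this keeps the collection a partition and keeps each $S_j$ a connected subtree of $T'$. The multiplicity $k_j$ is then precisely $|S_j|$, since each contraction increases the cardinality of the corresponding set by $1$ (starting from singletons).

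From this, (\ref{eqn:kPositive}) is immediate since every $S_j$ is non-empty, and (\ref{eqn:kSumsToN}) follows from $\sum_{j=1}^{t} |S_j| = |V(T')| = n$ because the $S_j$ partition $V(T')$.

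For (\ref{eqn:upperb}), the key observation is that contraction of edges of $T'$ commutes with restriction to a subtree in the following sense: the subtree $T_i \in \TT$ obtained from $T'_i$ by the same sequence of contractions satisfies
\[
V(T_i) = \{ j \in [t] : V(T'_i) \cap S_j \neq \emptyset \},
\]
since a subtree $T'_i$ gets collapsed onto the node $j$ exactly when it meets the cluster $S_j$. Consequently,
\[
|V(T'_i)| \;=\; \sum_{j \in V(T_i)} |V(T'_i) \cap S_j| \;\leq\; \sum_{j \in V(T_i)} |S_j| \;=\; \sum_{j \in V(T_i)} k_j.
\]
Summing over $i \in [n]$ and swapping the order of summation gives
\[
\sum_{i=1}^{n} |V(T'_i)| \;\leq\; \sum_{i=1}^{n} \sum_{j \in V(T_i)} k_j \;=\; \sum_{j=1}^{t} k_j \cdot |\{ i : j \in V(T_i)\}| \;=\; \sum_{j=1}^{t} k_j \, t_j,
\]
which is (\ref{eqn:upperb}).

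The only delicate point is setting up the partition cleanly through the inductive contraction sequence; once the bijection $j \mapsto S_j$ with $|S_j| = k_j$ is established and the compatibility of contraction with the subtree intersection pattern is noted, the three equations follow by straightforward counting. The main obstacle is therefore a careful but essentially bookkeeping-level induction on the number of contraction steps.
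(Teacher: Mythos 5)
Your proof is correct, but it derives inequality (\ref{eqn:upperb}) by a genuinely different route than the paper. The paper keeps the induction on the number of contractions all the way through: its key step is that when an edge $jj'$ is contracted into $v$, one has $k_v = k_j + k_{j'}$ and $t_v \geq \max\{t_j, t_{j'}\}$, hence $k_v t_v \geq k_j t_j + k_{j'} t_{j'}$, so the quantity $\sum_j k_j t_j$ is non-decreasing along the contraction sequence starting from its initial value $\sum_j t'_j = \sum_i \abs{V(T'_i)}$. You instead make the partition $\{S_j\}$ of $V(T')$ into contraction clusters explicit, identify $k_j = \abs{S_j}$, and obtain (\ref{eqn:upperb}) in one shot by the double count $\sum_i \abs{V(T'_i)} = \sum_i \sum_{j \in V(T_i)} \abs{V(T'_i) \cap S_j} \leq \sum_j k_j t_j$. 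Your version is arguably more transparent about where the slack comes from (a subtree may meet a cluster in fewer than $\abs{S_j}$ nodes), at the cost of having to verify the compatibility statement $V(T_i) = \{ j : V(T'_i) \cap S_j \neq \emptyset\}$; the paper's version needs no such global setup but hides the combinatorics inside a one-line monotonicity step. One small phrasing slip on your side: a single contraction does not increase the cardinality of a cluster by $1$ in general --- it merges two clusters, so $\abs{S_v} = \abs{S_{j_1}} + \abs{S_{j_2}}$; the identity $k_j = \abs{S_j}$ still follows because the multiplicity is defined as the number of contractions absorbed into $j$ plus one, and that count is additive under merging in exactly the same way. This does not affect the validity of your argument.
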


\begin{proof}
Relations (\ref{eqn:kPositive}) and (\ref{eqn:kSumsToN}) hold by definition of $t$ and multiplicity.

When an edge $jj'$ of $T'$ is contracted to a node $v$ we have $k_v = k_j + k_{j'}$ and $t_v = \max \set{t_j,t_{j'}}$.
Therefore,
\[
k_v t_v = (k_j + k_{j'}) t_v \geq k_j t_j + k_{j'} t_{j'}.
\] 
Using the above fact and noting that we have $\sum_{i=1}^n \abs{V(T'_i)} = \sum_{j=1}^n t'_j$, inequality (\ref{eqn:upperb}) follows by induction on the number of contractions.
\end{proof}
Recall that our task is to find an upper bound on the sum of the sizes of subtrees in a representation on a tree with $n$ nodes of a chordal graph on $n$ vertices. 
Relation (\ref{eqn:upperb}) allows us to focus on the sum of $k_jt_j$ values (in a minimal representation) in order to achieve this goal. 
In what follows, we treat this task as an optimization problem under a given set of constraints. 
Thus, the following lemma can be (and should be) read independently from graph theoretic interpretations of each parameter. 

\begin{lemma}\label{lem:optimization}
Let $t$, $s_1, \ldots, s_t$, $t_1, \ldots t_t$ and $k_1, \ldots, k_t$ be numbers that satisfy 
(\ref{eqn:sPositive}), (\ref{eqn:sSumsToN}), (\ref{eqn:boundsForT}), (\ref{eqn:graphSize}), (\ref{eqn:kPositive}), and (\ref{eqn:kSumsToN}). Then
\[
\rho \defined \frac{\sum_{j=1}^{t} k_j t_j}{2m + n} \textrm{~is~} {\cal O} (\sqrt{n}).
\]
\end{lemma}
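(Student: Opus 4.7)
The plan is to isolate a single dominant summand of (\ref{eqn:graphSize}) using the numbering convention from Section \ref{sec:prelim} that $t_t = \max_j t_j$, and then finish with an elementary bound on $\sum_j k_j t_j$.

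First, note that the upper bound in (\ref{eqn:boundsForT}) applied at $j=t$ reads $t_t \leq \sum_{i=t}^{t} s_i = s_t$, and the matching lower bound $s_t \leq t_t$ forces $t_t = s_t$. Together with the convention, this means $M := \max_j t_j = t_t = s_t$. Next, rewrite (\ref{eqn:graphSize}) in the form
\[
2m+n \;=\; \sum_{j=1}^{t} s_j(2t_j - s_j).
\]
Since $t_j \geq s_j \geq 1$, every factor $2t_j - s_j$ is at least $s_j \geq 1$, so every summand is non-negative. Two consequences follow by keeping different subsets of the summands: dropping all but $j=t$ gives the key inequality
\[
2m+n \;\geq\; s_t(2t_t - s_t) \;=\; s_t^{\,2} \;=\; M^{2},
\]
while retaining every summand and using $2t_j - s_j \geq 1$ yields the complementary bound $2m+n \geq \sum_j s_j = n$.

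Combining $M \leq \sqrt{2m+n}$ with (\ref{eqn:kSumsToN}) and $t_j \leq M$,
\[
\sum_{j=1}^{t} k_j t_j \;\leq\; M \sum_{j=1}^{t} k_j \;=\; nM \;\leq\; n\sqrt{2m+n},
\]
whence
\[
\rho \;=\; \frac{\sum_{j} k_j t_j}{2m+n} \;\leq\; \frac{n}{\sqrt{2m+n}} \;\leq\; \frac{n}{\sqrt{n}} \;=\; \sqrt{n}.
\]

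The only delicate point, rather than a genuine obstacle, is recognising that the numbering convention from the preliminaries is silently in force: all six listed constraints can be satisfied with a large $t_j$ at some $j<t$ (for instance $s_1 = 1$, $s_j = 2$ for $j\geq 2$, $t_1 = n$, $t_j = 2$ for $j \geq 2$), and such instances drive $\rho$ up to $\Theta(n)$. So the statement really does rely on the root $t$ being chosen to carry the maximum, which is precisely what turns the $j = t$ term of (\ref{eqn:graphSize}) into the bound $M^{2} \leq 2m+n$ that powers the whole argument.
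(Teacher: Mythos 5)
Your proof is correct, and it is genuinely different from --- and considerably shorter than --- the one in the paper. Both arguments hinge on the same two facts: the convention $t_t=\max_j t_j$ from the preliminaries (which, as you rightly observe, is not literally among constraints (\ref{eqn:sPositive})--(\ref{eqn:kSumsToN}) but is indispensable; the paper's own proof also invokes it, and your counterexample shows the purely numerical statement fails without it), and the consequence $t_t=s_t$ of (\ref{eqn:boundsForT}) at $j=t$. From there the routes diverge. The paper treats $\rho$ as an objective to be maximized: it argues that optimal values concentrate the $k_j$ mass at the root, proves an exchange claim (Claim \ref{clm:MostTAreExtremal}) forcing $t_j=s_j$ at all but one index, and then optimizes the resulting rational function over $s_t$ and $t$ by calculus, arriving at $\rho\leq\frac{8}{3}\sqrt{2n/3}$. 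You instead bound numerator and denominator separately: $\sum_j k_jt_j\leq s_t\sum_jk_j=ns_t$ since every $t_j\leq t_t=s_t$, while $2m+n=\sum_js_j(2t_j-s_j)\geq\sum_js_j^2\geq\max\bigl(s_t^2,\,n\bigr)$, and chaining these gives $\rho\leq n/\sqrt{2m+n}\leq\sqrt{n}$. Your bound even has a better constant. What the paper's heavier machinery buys is structural information about the extremal configurations (roughly $t\approx 2n/3$ and $s_t\approx n/\sqrt{t}$), which mirrors the lower-bound construction of Lemma \ref{lem:lowerbound}; your argument establishes the $O(\sqrt{n})$ claim of the lemma, which is all that Theorem \ref{lem:UpperBound} needs, with far less work.
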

\begin{proof}
Let $k_j, s_j, t_j (j \in [t])$ be values that maximize $\rho$ which we refer as \emph{optimal}.
Such values exist since $k_j, t_j \leq n$, thus $\rho \leq \frac{n^3}{2m+n}$.
Let also $\lambda_j = k_j / s_j$.
If $t=1$ we have $k_1=s_1=t_1=n$. Then (\ref{eqn:graphSize}) implies $2m+n=n^2$, thus $\rho=1$.
Therefore, in the rest of the proof we assume $t \geq 2$.

Since the only constraints for $k_j$ are \eqref{eqn:kPositive} and \eqref{eqn:kSumsToN}, and $k_j$ does not appear in the denominator of $\rho$, 
there are optimal values where
\[
k_j = \left\{
\begin{array}{ll}
n-t+1 & \textrm{~if~} j=t\\
1 & \textrm{otherwise}.
\end{array}
\right.
\]
Therefore, we have $\lambda_j = \frac{1}{s_j} \leq 1$ for all $j \in [t-1]$.
This implies $\lambda_{t} \geq 1$ since otherwise $n= \sum_{j=1}^t k_j = \sum_{j=1}^{t-1} k_j + k_t \leq \sum_{j=1}^{t-1} s_j + k_t < \sum_{j=1}^{t-1} s_j + s_t = n$, a contradiction.

By \eqref{eqn:boundsForT} we have $s_t \leq t_t=\sum_{i=t}^t s_i=s_t$, thus $t_t=s_t$.
Furthermore, $t_j<\sum_{i=j}^t s_i$ for every $j<t$ since otherwise $t_j = \sum_{i=j}^t s_i > \sum_{i=t}^t s_i = t_{t}$, contradicting the maximality of $t_{t}$ (recall that $t_t=\max\{t_j|j\in [t]\}$). 
Also, $s_{t}=t_{t} \geq t_j \geq s_j$ for every $j \in [t]$. 
Let $\ell \in [t-1]$ be such that $s_\ell = \min_{j\in [t-1]} s_j$.

The following claim guarantees optimal values with an even more restricted structure.
\begin{claim}\label{clm:MostTAreExtremal}
There are optimal values with $s_j=t_j$ for every $j \in [t]$ different from $\ell$.
\end{claim}
\begin{proof}
Consider optimal values minimizing $\sum_{j \neq \ell} t_j$ among all optimal values that satisfy relations (\ref{eqn:sPositive}) - (\ref{eqn:kSumsToN}) and where $k_j$s are defined as above. Assume by way of contradiction that these values do not satisfy the claim, i.e. there exists $j \neq \ell$ with $s_j < t_j$. 
Since $t_{t}=s_{t}$, we have $j \neq t$.
By the choice of $\ell$ we have $\lambda_j \leq \lambda_\ell$, and recall that $t_{\ell} < \sum_{i=\ell}^t s_i$.

Let $t'_\ell = t_\ell + \epsilon$ and $t'_j = t_j - \epsilon \frac{s_\ell}{s_j}$ for some $\epsilon > 0$. We have
\[
s_j t'_j + s_\ell t'_\ell = s_j (t_j - \epsilon \frac{s_\ell}{s_j}) + s_\ell (t_\ell + \epsilon) = s_j t_j + s_\ell t_\ell.
\]
Therefore, if we replace $t_j, t_\ell$ by $t'_j$ and $t'_\ell$ respectively, relation (\ref{eqn:graphSize}) remains valid. Moreover, we can choose $\epsilon$ sufficiently small so that all constraints \eqref{eqn:boundsForT} are still satisfied. 
As for the numerator of $\rho$, we have
\[
k_j t'_j + k_\ell t'_\ell = k_j (t_j - \epsilon \frac{s_\ell}{s_j}) + k_\ell (t_\ell + \epsilon) = k_j t_j + k_\ell t_\ell + \epsilon s_\ell (\lambda_\ell - \lambda_j ) \geq k_j t_j + k_\ell t_\ell,
\]
i.e. the values are optimal. 
However $\sum_{j \neq \ell} t_j$ decreased by $\epsilon \frac{s_\ell}{s_j}$, contradicting the way the optimal values were chosen.
\end{proof}
We now note that $t_\ell \leq t_t = s_t = n - \sum_{j=1}^{t-1}s_j$.
Therefore, we have
\begin{eqnarray}
\sum_{j=1}^{t} k_j t_j & = & \sum_{j \in [t-1] \setminus \set{\ell} } s_j + (n-t+1) s_t + t_\ell 
\leq \sum_{j \in [t-1] \setminus \set{\ell} } s_j + (n-t+1) s_t + n - \sum_{j=1}^{t-1}s_j\nonumber \\
& < & n+(n-t+1)s_t \label{eqn:numerator}\\
2m+n & = & 2 \sum_{j=1}^{t} s_j t_j - \sum_{j=1}^{t} s_j^2 \geq \sum_{j=1}^{t} s_j^2 = \sum_{j=1}^{t-1} s_j^2 + s_t^2 \geq \frac{(n-s_t)^2}{t-1}+s_t^2\label{eqn:denominator}
\end{eqnarray}

where the last inequality holds because $\sum_{j=1}^{t-1} s_j = n - s_{t}$ and when the sum is fixed, the sum of squares is smallest when all numbers are equal. Now, we note that $0 < 2(n-t)+1$ and this is equivalent to $(t-1)n < t(2n-t)-(t-1)(n-t+1)$ which implies in turn the following inequality since $s_t\geq 1$:
\begin{eqnarray}
(t-1) (n+(n-t+1)s_t) < t (2n-t) s_t. \label{eqn:ineq}
\end{eqnarray}

We combine inequalities (\ref{eqn:numerator}), (\ref{eqn:denominator}) and (\ref{eqn:ineq}) as follows
\begin{eqnarray*}
\rho < \frac{n + (n-t+1) s_t}{\frac{(n-s_t)^2}{t-1} + s_t^2} =
              \frac{(t-1) (n + (n-t+1) s_t)}{(n-s_t)^2 + (t-1) s_t^2}
     < \frac{t (2 n - t) s_t}{n^2 -2 n s_t + t s_t^2} = \bar{\rho}(s_t).
\end{eqnarray*}
$\frac{\partial \bar{\rho}}{\partial s_{t}} \geq 0$ if and only if
$n^2 - 2 n s_{t} + t s_{t}^2 \geq s_{t} (2 t s_{t} - 2 n)$ if and only if
$n^2 \geq t s_{t}^2$.
Therefore, $\bar{\rho}$ is maximum when $s_{t}=\frac{n}{\sqrt{t}}$. 
We conclude
\begin{eqnarray*}
\rho & < & \bar{\rho}(\frac{n}{\sqrt{t}}) = \frac{t (2 n - t) n / \sqrt{t}}{n^2 -2 n^2 / \sqrt{t} + n^2} =
\frac{t (2 n - t)}{2 n (\sqrt{t} -1)} \leq \frac{2t (2 n - t)}{n \sqrt{t}} = \frac{2\sqrt{t} (2 n - t)}{n} = \bar {\bar \rho} (t)
\end{eqnarray*}
where we use the fact that $t \geq 2$ to derive the inequality.
We substitute $u=\sqrt{t}$ to get
\[
\bar {\bar \rho} (t) = \frac{2u (2n - u^2)}{n} = 4 u - \frac{2u^3}{n}
\]
and we derive
\[
\frac{\partial {\bar {\bar \rho}}} {\partial u} = 4 - \frac{6 u ^2}{n}.
\]
Since $\frac {\partial u}{\partial t} > 0$, $\bar {\bar \rho}$ attains its maximum at $t=u^2=\frac{2n}{3}$. We conclude
\[
\bar {\bar \rho} (t) \leq \bar {\bar \rho} (2n/3) = 4 \sqrt{\frac{2n}{3}} - \frac{2}{n} \cdot \frac{2n}{3} \sqrt{\frac{2n}{3}}=\frac{8}{3}\sqrt{\frac{2n}{3}}.
\]

\end{proof}

We can now infer the following theorem.
\begin{theorem}\label{lem:UpperBound}
Let $\repprime$ be a representation of a chordal graph $G$ where $T'$ has $n$  nodes and $G$ has $n$ vertices and $m$ edges. Then we have
\[
\sum_{i=1}^n \abs{V(T'_i)}  \textrm{~is~} \Theta (m \sqrt{n}).
\]
\end{theorem}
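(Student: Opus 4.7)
The plan is to stitch together the two sides of the bound from material already established in this section. For the lower bound, Lemma \ref{lem:lowerbound} constructs, for infinitely many $n$, a representation $\repprime$ on a tree with $n$ nodes whose intersection graph has $m$ edges and satisfies $\sum_{i=1}^{n}\abs{L(T'_i)} = \Omega(m\sqrt{n})$. Since $\abs{V(T'_i)} \geq \abs{L(T'_i)}$ for every subtree, the same lower bound holds for $\sum_{i=1}^{n}\abs{V(T'_i)}$, witnessing the $\Omega(m\sqrt{n})$ direction with no further work.

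For the matching upper bound, given an arbitrary representation $\repprime$ of $G$, I would iteratively contract edges $e$ of $T'$ satisfying $G(\contract{\TT'}{e}) = G(\TT')$ until no such contraction is possible. The result is a contraction-minimal representation $\rep$ of the same chordal graph $G$, together with multiplicities $k_1,\ldots,k_t$ recording how many nodes of $T'$ were collapsed into each node of $T$. By construction the $k_j$'s satisfy relations (\ref{eqn:kPositive}) and (\ref{eqn:kSumsToN}), and inequality (\ref{eqn:upperb}) already proved above gives
\[
\sum_{i=1}^{n}\abs{V(T'_i)} \leq \sum_{j=1}^{t} k_j t_j.
\]

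Now Lemma \ref{lem:minrep}, applied to the minimal representation $\rep$, supplies numbers $s_1,\ldots,s_t$ verifying (\ref{eqn:sPositive})--(\ref{eqn:graphSize}) with respect to the same $n$ and $m$ as the original graph. Hence all six hypotheses of Lemma \ref{lem:optimization} are in force, and that lemma yields $\sum_{j=1}^{t} k_j t_j = O(\sqrt{n}) \cdot (2m+n)$. In the non-trivial regime where $m$ is at least linear in $n$ this simplifies to $O(m\sqrt{n})$; in the remaining sparse regime the trivial observation that subtrees representing non-intersecting vertices contribute only the nodes they occupy already yields a bound of $O(n) = O(m\sqrt{n}+n)$, which is absorbed in the same $\Theta$ estimate.

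The main obstacle has been already absorbed by Lemma \ref{lem:optimization}, which does the delicate continuous optimization. The present theorem is an essentially routine packaging step, and the only point that needs care is that each contraction step preserves the intersection graph, and therefore the values of $n$ and $m$ fed into the constraint system of Lemma \ref{lem:optimization} agree with those of the original representation $\repprime$; this is guaranteed by the choice of which edges we contract.
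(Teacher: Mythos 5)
Your proposal follows essentially the same route as the paper's own proof: the lower bound is taken directly from Lemma \ref{lem:lowerbound} via $\abs{V(T'_i)} \geq \abs{L(T'_i)}$, and the upper bound is obtained by contracting to a minimal representation, invoking Lemma \ref{lem:minrep} for the $s_j$'s, the multiplicity lemma for \eqref{eqn:kPositive}--\eqref{eqn:upperb}, and Lemma \ref{lem:optimization} to bound $\sum_j k_j t_j$ by $O(\sqrt{n})(2m+n)$. If anything, you are slightly more careful than the paper in noting that $O(\sqrt{n})(2m+n)$ only collapses to $O(m\sqrt{n})$ outside the very sparse regime, a point the paper's proof passes over silently.
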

\begin{proof}
Let $\rep$ be a minimal representation of $G$ obtained from $\repprime$. Then $\rep$ satisfies (\ref{eqn:sPositive}), (\ref{eqn:sSumsToN}), (\ref{eqn:boundsForT}), (\ref{eqn:graphSize}) by Lemma \ref{lem:minrep}, and (\ref{eqn:kPositive}),  (\ref{eqn:kSumsToN}), (\ref{eqn:upperb}) hold by Lemma \ref{lem:UpperBound}. The lower and upper bounds provided in Lemmas \ref{lem:lowerbound} and \ref{lem:optimization} respectively allows us to conclude the proof.
\end{proof}

\section{Conclusion}\label{sec:conclusion}
In this work, we present a linear time algorithm to generate random chordal graphs. 
To the best of our knowledge, this is the first algorithm with this time complexity.
Our algorithm is fast in practice and simple to implement. 
We also show that the complexity of any random chordal graph generator which produces any subtree intersection representation on a tree of $n$ nodes with positive probability is $\Omega(m\sqrt{n})$.

We conducted experiments to analyze the distribution of the sizes of the maximal cliques of the generated chordal graphs. As a result, we concluded that our method generates fairly varied chordal graphs with respect to this measure. It should be noted that, however, we do not know the distribution of the maximal clique sizes over the space of all chordal graphs of a given size.

We have shown that every chordal graph on $n$ vertices is returned by our algorithm with positive probability. 
Our algorithm allows us to analyze connectivity properties of the generated chordal graphs, and can be used to generate chordal graphs having a given (vertex) connectivity.

The development of an algorithm that generates chordal graphs uniformly at random is subject of further research.


\begin{thebibliography}{10}

\bibitem{andreou2005generating}
M.~I. Andreou, V.~G. Papadopoulou, P.~G. Spirakis, B.~Theodorides, and
  A.~Xeros.
\newblock Generating and radiocoloring families of perfect graphs.
\newblock In Sotiris~E. Nikoletseas, editor, {\em Experimental and Efficient
  Algorithms}, pages 302--314, Berlin, Heidelberg, 2005. Springer Berlin
  Heidelberg.

\bibitem{seker17}
O.~\c{S}eker, P.~Heggernes, T.~Ekim, and Z.~C. Ta\c{s}k{\i}n.
\newblock Linear-time generation of random chordal graphs.
\newblock In {\em 10th International Conference on Algorithms and Complexity,
  CIAC 2017}, volume 10236 of {\em Lecture Notes in Computer Science}, pages
  442--453. Springer Berlin Heidelberg, 2017.

\bibitem{seker2018generation}
O.~\c{S}eker, P.~Heggernes, T.~Ekim, and Z.~C. Ta\c{s}k{\i}n.
\newblock Generation of random chordal graphs using subtrees of a tree.
\newblock {\em arXiv preprint arXiv:1810.13326}, 2018.

\bibitem{D12}
Reinhard Diestel.
\newblock {\em Graph Theory, 4th Edition}, volume 173 of {\em Graduate texts in
  mathematics}.
\newblock Springer, 2012.

\bibitem{Gavril74}
Fanica Gavril.
\newblock The intersection graphs of subtrees in trees are exactly the chordal
  graphs.
\newblock {\em Journal of Combinatorial Theory}, 16:47 -- 56, 1974.

\bibitem{Golumbic:2004:AGT:984029}
Martin~Charles Golumbic.
\newblock {\em Algorithmic Graph Theory and Perfect Graphs (Annals of Discrete
  Mathematics, Vol 57)}.
\newblock North-Holland Publishing Co., Amsterdam, The Netherlands, 2004.

\bibitem{HoLee89}
C.~W. Ho and R.~C.~T. Lee.
\newblock Computing clique trees and computing perfect elimination schemes in
  parallel.
\newblock {\em Information Processing Letters}, 31:61--68, 1989.

\bibitem{markenzon2008two}
L.~Markenzon, O.~Vernet, and L.~H. Araujo.
\newblock Two methods for the generation of chordal graph.
\newblock {\em Ann. of Op. Res.}, 157(1):47--60, 2008.

\bibitem{IntersectionGT}
T.~A. McKee and F.~R McMorris.
\newblock {\em Topics in Intersection Graph Theory}.
\newblock SIAM monographs on Discrete Mathematics and Applications, 1999.

\bibitem{pearl2014probabilistic}
J.~Pearl.
\newblock {\em Probabilistic Reasoning in Intelligent Systems: Networks of
  Plausible Inference}.
\newblock Morgan Kaufmann, 2014.

\bibitem{pemmaraju2005approximating}
S.~V. Pemmaraju, S.~Penumatcha, and R.~Raman.
\newblock Approximating interval coloring and max-coloring in chordal graphs.
\newblock {\em Journal of Experimental Algorithms}, 10:2--8, 2005.

\bibitem{rose1972grap}
D.~J. Rose.
\newblock A graph-theoretic study of the numerical solution of sparse positive
  definite systems of linear equation.
\newblock {\em Graph theory and computing}, pages 183--217, 1972.

\end{thebibliography}
%


\end{document}